\newtheorem{lemma}{Lemma}
\newtheorem{theorem}{Theorem}
\newtheorem{definition}{Definition}
\newcommand{\setof}[1]{\left\{{#1}\right\}}
\newcommand{\set}[2]{\left\{#1\mid #2\right\}}
\newcommand{\tsocs}[0]{\mbox{\emph{MMSS}}}
\newcommand{\litmus}[0]{$\text{LITMUS}^{\text{RT}}$}
\newcolumntype{L}[1]{>{\raggedright\let\newline\\\arraybackslash\hspace{0pt}}m{#1}}
\newcolumntype{C}[1]{>{\centering\let\newline\\\arraybackslash\hspace{0pt}}m{#1}}
\newcolumntype{R}[1]{>{\raggedleft\let\newline\\\arraybackslash\hspace{0pt}}m{#1}}
\tikzset{
	task/.style={shade, shading=radial, rectangle,minimum height=.1cm,
		inner color=#1!20, outer color=#1!60!gray},
	task1/.style={task=yellow, minimum width=13mm},
	task2/.style={task=orange, minimum width=13mm},
	task3/.style={task=red, minimum width=13mm},
	task4/.style={task=green, minimum width=13mm},
	task5/.style={task=blue, minimum width=13mm},
	task6/.style={task=purple, minimum width=13mm},
	task7/.style={task=cyan, minimum width=13mm},
	task8/.style={task=pink, minimum width=13mm},
}
\newcommand{\algorithmicinput}{\textbf{Input:}}
\newcommand{\INPUT}{\item[\algorithmicinput]}
	\providecommand\BibTeX{{%
			\normalfont B\kern-0.5em{\scshape i\kern-0.25em b}\kern-0.8em\TeX}}}
\begin{document}
\title{Scheduling of Real-Time Tasks with Multiple Critical Sections in
	Multiprocessor Systems}

\author{Jian-Jia Chen, Junjie Shi, Georg von der Br\"uggen, and Niklas Ueter\\
	Department of Informatics, TU Dortmund University, Germany\\
	\{jian-jia.chen, junjie.shi, georg.von-der-brueggen, niklas.ueter\}@tu-dortmund.de\\

}

%
%

\ifArxivVersion
    \markboth{Early Announcement}%
{Shell \MakeLowercase{\textit{et al.}}: Bare Demo of IEEEtran.cls for Computer Society Journals}
\else
    \markboth{submitted to IEEE Transactions on Computers}%
{Shell \MakeLowercase{\textit{et al.}}: Bare Demo of IEEEtran.cls for Computer Society Journals}
\fi

%



\IEEEtitleabstractindextext{%
\begin{abstract}
The performance of 
multiprocessor synchronization and locking protocols
is a key factor to utilize the computation power
of multiprocessor systems under real-time constraints. While 
multiple protocols have been developed in the past decades, 
their performance highly depends on 
the task partition and prioritization.
The recently proposed Dependency Graph Approach showed its advantages and
attracted a lot of interest. 
It is, however, restricted to
task sets where each task 
has at most one critical section.
In this paper, we remove this restriction 
and demonstrate how to utilize 
algorithms for the classical job shop
scheduling problem to construct a dependency graph for tasks with multiple
critical sections. 
To show  the 
applicability, we discuss the
implementation in \litmus{} and report the overheads. 
Moreover, we provide 
extensive numerical evaluations under different configurations, 
which  in many situations show significant improvement compared to the
state-of-the-art.
\end{abstract}

\begin{IEEEkeywords}
Real-Time Systems, Multiprocessor Resource Synchronization, Job Shop, and Dependency Graph Approaches
\end{IEEEkeywords}}

\maketitle

\IEEEdisplaynontitleabstractindextext

%
\IEEEpeerreviewmaketitle

\IEEEraisesectionheading{\section{Introduction}\label{sec:introduction}}
\IEEEPARstart{U}{nder} 
the von-Neumann programming model, shared resources that require
mutual exclusive accesses, such as shared files, data structures,
etc., have to be protected by applying synchronization (\emph{binary
	semaphores}) or locking (\emph{mutex locks}) mechanisms. A protected
code segment that has to access a shared resource mutually exclusively is
called a \emph{critical section}.  For uniprocessor real-time systems,
the state-of-the-art are longstanding  protocols that have been developed in the
90s,
namely the
Priority Inheritance Protocol (PIP) and the Priority Ceiling Protocol~(PCP) 
by Sha~et~al.~\cite{DBLP:journals/tc/ShaRL90}, as well as the Stack
Resource Policy~(SRP) by Baker~\cite{DBLP:journals/rts/Baker91}.
Specifically, a variant of PCP has been implemented in Ada (called
Ceiling locking) and in POSIX (called Priority Protect Protocol).

Due to the development of multiprocessor platforms,
multiprocessor resource synchronization and locking protocols have
been proposed and extensively studied, such as the Distributed 
PCP
(DPCP)~\cite{DBLP:conf/rtss/RajkumarSL88}, the
Multiprocessor 
PCP (MPCP)~\cite{Rajkumar_1990},
the Multiprocessor SRP 
(MSRP)~\cite{DBLP:conf/rtss/GaiLN01}, the Flexible Multiprocessor
Locking Protocol (FMLP)~\cite{block-2007}, the Multiprocessor
PIP~\cite{DBLP:conf/rtss/EaswaranA09}, the $O(m)$ Locking Protocol
(OMLP)~\cite{DBLP:conf/rtss/BrandenburgA10}, the Multiprocessor
Bandwidth Inheritance (M-BWI)~\cite{DBLP:conf/ecrts/FaggioliLC10},
and the
Multiprocessor resource sharing Protocol
(MrsP)~\cite{DBLP:conf/ecrts/BurnsW13}.  Since the performance of
these protocols highly depends on task partitioning, several
partitioning algorithms were developed in the literature, e.g., for
MPCP by Lakshmanan~et~al.~\cite{DBLP:conf/rtss/LakshmananNR09} and
Nemati~et~al.~\cite{DBLP:conf/opodis/NematiNB10}, for MSRP by
Wieder~and~Brandenburg~\cite{DBLP:conf/sies/WiederB13}, and for DPCP
by Hsiu~et~al.~\cite{DBLP:conf/emsoft/HsiuLK11}, Huang et.
al~\cite{RTSS2016-resource}, and von der Br\"uggen et
al.~\cite{RTNS17-resource}.  In addition to the theoretical soundness
of these protocols, some of them have been implemented in the
real-time operating systems
\litmus~\cite{calandrino2006litmus,bbb-2011} and RTEMS~\footnote{http://www.rtems.org/}.


For several decades, the primary focus when considering multiprocessor
synchronization and locking in real-time systems has been the design and analysis of
resource sharing protocols, 
where the protocols decide the order in which the new incoming requests access
the shared resources dynamically. Contrarily, the Dependency Graph
Approaches (DGA), that was proposed by Chen et al.~\cite{Chen-Dependency-RTSS18} in 2018, pre-computes
the order in which tasks are allowed to access resources, 
and consists of two individual steps: 
\begin{compactenum}
	\item A dependency graph is constructed to determine the
	execution order of the critical sections guarded by \emph{one binary
		semaphore or mutex lock}.
	\item Multiprocessor scheduling algorithms are applied to
	schedule the tasks by respecting the constraints given by the constructed
	dependency graph(s).
\end{compactenum}
Chen~et~al.~\cite{Chen-Dependency-RTSS18} showed significant
improvement against existing protocol-based approaches from the
empirical as well as from the theoretical perspective, and demonstrated 
the practical applicability of the DGA by implementing it in  
\litmus~\cite{calandrino2006litmus,bbb-2011}. However, 
the original dependency graph approaches
presented in~\cite{Chen-Dependency-RTSS18} has two strong limitations: 
1)~the construction in the first step allows only one critical section per task, and
2)~the presented algorithms can only be applied for frame-based
real-time task systems, i.e., all tasks have the same period and
release their jobs always at the same time.
The latter has been
recently 
removed by Shi et~al.~\cite{Shi-Dependency-RTAS2019}, 
who applied the DGA after unrolling the jobs in the
hyperperiod. However, the former remains open and is a fundamental
obstacle which limits the generality of the dependency graph
approaches.

In the original DGA, the assumption that each task has only one non-nested
critical section allows the algorithm to partition the tasks
according to their shared resources in the first step. However, when a
task accesses multiple shared resources, such a partitioning is no
longer possible. Therefore, to enable the DGA for tasks with multiple
critical sections, an exploration of effective construction mechanisms for a
dependency graph that  considers the interactions of the shared
resources is needed.

\noindent\textbf{Contribution:}
In this paper, we focus on
allowing multiple critical sections per task 
in the dependency graph approaches 
for both frame-based 
and periodic real-time
task systems with synchronous releases. 
Our contributions are: 
\begin{compactitem}
	
	\item Our key observation is the correlation between the dependency graph in DGA and
	the classical \emph{job shop scheduling problem}.
	With respect to the 
	computational complexity, we present a polynomial-time reduction from
	the classical \emph{job shop scheduling problem}, which is ${\mathcal NP}$-hard in the
	strong sense~\cite{LenstraRinnooy-Kan:79:Computational-complexity}.
	Intractability results are established even
	for severely restricted instances of the studied multiprocessor
	synchronization problem, as detailed in 
	Sec.~\ref{sec:computational-complexity}.
	
	\item For frame-based task sets, we reduce the problem of constructing the 
	dependency graph in the DGA to the classical \emph{job shop scheduling problem}
	in Sec.~\ref{sec:our-algorithm-based-on-shop}, and  
	establish approximation bounds for minimizing the makespan based on
	the approximation bounds of 
	job-shop algorithms. 
	Sec.~\ref{sec:extention-to-periodic} 
	details how these results can be extended to 
	periodic real-time task systems.	
	
	\item We explain how we implemented the dependency graph approach 
	with multiple critical sections 
	in \litmus{} and report the overheads in Sec.~\ref{sec:implementation-overhead},
	showing that 
	our new implemented approach is comparable to the existing methods with respect to the overheads.
	\item We provide extensive numerical evaluations in
	Sec.~\ref{sec:evaluations}, which demonstrate the
	performance of the proposed approach under different system
	configurations. 
	Compared to the state-of-the-art, our approach shows
	significant improvement for all the evaluated frame-based real-time task systems
	and for most of the evaluated periodic task systems.  
\end{compactitem}

%
%
%
%

\section{System Model}
\label{sec:system-model}

\subsection{Task Model}
\label{sec:task-model}

We consider a set $\bf T$ of $n$ recurrent tasks to be scheduled on $M$
identical (homogeneous) processors. All tasks can have multiple (non-nested) critical
sections 
and may access several 
of the $Z$ shared resources. 
Each task $\tau_i$ 
is described by
\mbox{$\tau_i=((\eta_i, C_i),~T_i,~D_i)$}, where:

\begin{compactitem}
	\item $\eta_i$ is the number of computation segments in task $\tau_i$.
	\item $C_i$ is the total worst-case execution time (WCET) of 
	the computation segments in task $\tau_i$. 
	\item $T_i$ is the period of $\tau_i$.
	\item $D_i$ is the relative deadline of $\tau_i$. 
\end{compactitem}
We consider constrained deadlines,  
i.e., $\forall \tau_i \in \textbf{T},\ D_i \leq T_i$.
For the $j$-th segment of task $\tau_i$, denoted as
$\theta_{i,j} = (C_{i, j}, \lambda_{i,j})$:
\begin{compactitem}
	\item $C_{i,j}\geq 0$ is the WCET of computation segment $\theta_{i,j}$ with  
	$C_i=  \sum_{j=1}^{\eta_i} C_{i,j}$.
	\item $\lambda_{i,j}$ indicates whether the corresponding segment 
	is 
	a non-critical section or a critical section. 
	If $\theta_{i,j}$ is a critical section, $\lambda_{i,j}$ is $1$;
	otherwise, $\lambda_{i,j}$ is $0$.
	\item If $\theta_{i,j}$ is a non-critical section, then $\theta_{i,j-1}$ and
	$\theta_{i,j+1}$ must be critical sections (if they exist). That is,
	$\theta_{i,j}$ and $\theta_{i,j+1}$ cannot be both non-critical
	sections.
	\item If $\theta_{i,j}$ is a critical section, it starts from the lock of a
	mutex lock (or \emph{wait} for a binary semaphore), denoted by $\sigma_{i,j}$, and ends at the unlock
	of the same mutex lock (or \emph{signal} to the same binary semaphore).
\end{compactitem}
\noindent Furthermore, we make following assumptions: 
\begin{compactitem}
	\item A job cannot be executed in parallel, i.e., the computation segments in a
	job must be sequentially executed.
	\item The execution of the critical sections guarded by a mutex lock
	(or one binary semaphore) must be sequentially executed.
	Hence, if two computation segments share the same
	lock, they must be executed one after another.
	\item There are in total $Z$ mutex locks (or binary semaphores). 
\end{compactitem}
\noindent We consider two kinds of task systems, namely:
\begin{compactitem} 
	\item \textbf{Frame-based} task systems: all tasks release
	their jobs at the same time and have the same period and relative deadline, i.e., $\forall i, j,~ T_i = T_j
	\land D_i=D_j$. Hence, the analysis can be restricted to one job of each task.
	\item \textbf{Periodic} task systems (with synchronous release): all tasks
	release their first job at time $0$ and subsequent jobs are released periodically,
	but
	different tasks may have different periods and relative deadlines.
	The hyper-period of the task
	set $\textbf{T}$ is defined as the least common multiple (LCM) of the
	periods of the tasks in~$\textbf{T}$.  
\end{compactitem}

\subsection{Problem Definition and Approximation}
\label{sec:problem-definition}

In this subsection, 
we define the problem of scheduling frame-based
real-time tasks with multiple critical sections in homogeneous
multiprocessor systems. 

We define a schedule from the sub-job's perspective. Suppose that ${\bf
	\Theta}$ is the set of the computation segments, i.e., ${\bf
	\Theta}=\set{\theta_{i,j}}{\tau_i \in \textbf{T},
	j=1,2,\ldots,\eta_i}$. A schedule for $\textbf{T}$ is a function
$\rho: {\mathbb R}\times M \rightarrow \bf{\Theta}\cup\setof{\bot}$,
where $\rho(t,m) = \theta_{i,j}$ denotes that the sub-job
$\theta_{i,j}$ 
is executed at time~$t$ on processor
$m$, and $\rho(t, m) = \bot$ denotes that processor $m$ is idle at
time $t$. Since a job has to be sequentially executed, at any time
point $t \geq 0$, only a sub-job of $\tau_i$ can be executed on one of
the $M$ processors, i.e., if $\rho(t,m)$ is $\theta_{i,j}$,
then $\rho(t,m') \neq \theta_{i,k}$ for any
$k \leq \eta_i$ 
and $m' \neq m$. Moreover, since the sub-jobs of
a job must be executed sequentially, $\theta_{i,k}$ cannot be executed
before $\theta_{i,j}$ finishes for any 
$j<k\leq \eta_i$,
i.e., if $\rho(t,m)$ is $\theta_{i,j}$ for some $t, m, i, j$, then
$\rho(t',m) \neq \theta_{i,k}$ for any $t' \leq t$ and  any $k>j$.
The critical sections guarded by one mutex lock must be sequentially
executed, i.e., if $\lambda_{i,j}$ is $1$, $\lambda_{k,\ell}$ is $1$,
and $\sigma_{i,j}=\sigma_{k,\ell}$,
then when
$\rho(t,m)$ is $\theta_{i,j}$, and a schedule must guarantee that 
$\rho(t,m') \neq \theta_{k,\ell}$ for any $t \geq 0$ and $m \neq m'$.

We only consider schedules that can finish the execution demand of the
computation segments. 
Let $R$ be the finishing time of the
schedule. 
In this case,  $\sum_{m=1}^{M} \int_{0}^{R} [ \rho(t, m) =
\theta_{i,j}] dt$ must be equal to $C_{i,j}$, where $[P]$ is the
Iverson bracket, i.e., $[P]$ is $1$ when the condition $P$ holds, otherwise
$[P]$ is $0$. Note that the integration is used in this
paper only as a symbolic notation to represent the summation over
time. 
The earliest moment when all  sub-jobs finish their computation segments in the schedule (under all the
constraints defined above) is called the \emph{makespan} of the
schedule, commonly denoted as $C_{\max}$ in scheduling theory, i.e.,
$C_{\max}$ of schedule $\rho$ is:
\[
\text{min. } R \text{ ~~s. t. }  \sum_{m=1}^{M} \int_{0}^{R} [
\rho(t, m) = 
\theta_{i,j}] dt = C_{i,j}, \forall \theta_{i,j} \in {\bf \Theta}
\]

The problem of multiprocessor synchronization  with multiple critical
sections per task can be transferred to the following two general problems: 

\begin{definition}
	\label{def:problem-def}
	\textbf{Multiprocessor Multiple critical-Sections task Synchronization
		(\tsocs{}) makespan problem:} 
	Assume $M$ identical (homogeneous) processors and that 
	$n$ tasks are arriving at time $0$. Each task $\tau_i$ is composed of
	$\eta_i$ computation segments, each of which is either a non-nested
	critical section or a non-critical section. The objective is to find
	a schedule that minimizes the makespan.
\end{definition}

A feasible schedule of the \tsocs{} makespan problem is a schedule
that satisfies all aforementioned non-overlapping constraints. 
An optimal solution of an input
instance of the \tsocs{} makespan problem is the makespan of a
schedule that has the minimum makespan among the feasible schedules of
the input instance.  An algorithm ${\mathcal A}$ for the \tsocs{} makespan
problem has an \emph{approximation ratio} $a\geq 1$, if given any
task set $\textbf{T}$ and $M$ processors, the resulting makespan is at
most $a \cdot C_{\max}^*$, where $C_{\max}^*$ is the optimal makespan.

\begin{definition}
	\label{def:problem-decision-version-def}
	\textbf{The \tsocs{}  schedulability problem:} 
	Assume there are $M$ identical (homogeneous) processors and that 
	$n$ tasks are arriving at time $0$. 
	All tasks
	$\tau_i$ have the same deadline~$D$. Each task is composed of $\eta_i$
	computation segments, each of which is either a non-nested critical section or a
	non-critical section.  The objective is to find a feasible schedule that meets
	the deadline $D$ 
	on the given $M$ processors.
\end{definition}

A feasible schedule of the \tsocs{} schedulability problem is a
schedule that has a makespan no more than $D$ and 
satisfies all the non-overlapping constraints. 
The \tsocs{} schedulability problem is a decision problem, in
which for a given $D$ and a given
algorithm either a feasible schedule is derived  that
meets the deadlines or no feasible schedule can be derived from the algorithm.
For such a decision setting, the \emph{speedup
	factor}~\cite{Kalyanasundaram:2000,Phillips:stoc97} can be used to examine
the performance.
\emph{Provided that there exists one feasible schedule at the original
	speed}, the speedup factor \mbox{$a \geq 1$} of a scheduling
algorithm~${\mathcal A}$ for the \tsocs{} schedulability problem is the 
factor $a\geq1$ by which the overall speed of a system would need to be
increased so that the algorithm ${\mathcal A}$ always derives a feasible
schedule. 



\subsection{Notation from Scheduling Theory}
\label{sec:scheduling-theory}

In this subsection, for completeness, we summarize the classical
flow shop and job shop scheduling problems in operations research (OR).
In scheduling theory, a scheduling problem is described by a triplet
$\alpha|\beta|\gamma$.
\begin{compactitem}
\item $\alpha$ describes the machine (i.e., processing) environment.
\item $\beta$ specifies the characteristics and constraints.
\item $\gamma$ is the objective to be optimized.
\end{compactitem}

\noindent The widely used machine environment in $\alpha$ are:
\begin{compactitem}
	\item $1$: single machine (or uniprocessor).
	\item $P$: independent machines (or homogeneous multiprocessor systems).
	\item $F_M$: \textbf{flow shop.} The environment $F_M$ consists of $M$ machines
	and each job $i$ has a chain of $M$
	sub-jobs, denoted as $O_{i,1}, O_{i,2}, \ldots, O_{i,M}$, where the
	$M$ operations are executed in the specified order and $O_{i,m}$ is executed on the $m$-th machine.
	A job has to finish its operation
	on the $m$-th machine before it can start any operation on the
	$(m+1)$-th machine, for any $m=1,2,\ldots,M-1$.
	\item $J_M$: \textbf{job shop}, i.e., a job $i$ has a chain of $\eta_i$
	sub-jobs, denoted as $O_{i,1}, O_{i,2}, \ldots, O_{i,\eta_i}$, where the
	$\eta_i$ operations should be executed in the specified order and
	$O_{i,m}$ is executed on a specified machine. Note that a flow shop
	is a special case of a job shop environment.
\end{compactitem}

\noindent In this paper, we are specifically interested in three constraints
specified in $\beta$:
\begin{compactitem}
	\item $prmp$: preemptive scheduling. In classical
	scheduling theory, preemption in parallel machines implies the
	possibility of job migration from one machine to
	another machine.
	\item $r_j$: with specified arrival time of the job (and deadline). 
	\item $l_{i,j}$: preparation time between dependent job pair, i.e., job $i$ and job $j$.
	\item $prec$: the jobs have precedence constraints.
\end{compactitem}
Note that the scheduler is implicitly assumed to be non-preemptive if
$prmp$ is not specified. Furthermore, the job set is assumed to arrive at time
$0$ if $r_j$ is not specified. 

\noindent In addition, we are specifically interested in two objectives specified in $\gamma$:
\begin{compactitem}
	\item $C_{\max}$: to minimize the makespan, as defined in Sec.~\ref{sec:problem-definition}.
	\item $L_{\max}$: to minimize the maximum lateness over all jobs, in which the
	lateness of a job is defined as its finishing time minus its
	absolute deadline.
\end{compactitem}

\subsection{Critical Sections Access Patterns}
\noindent Two types of access patterns of the critical sections are considered,
which we name according to the applicable algorithms for convenience:
\begin{compactitem}
	\item \textbf{Flow-Shop Compatible Access Patterns}: A task set has 
	a pattern where \emph{flow-shop} approaches can be applied, if all tasks access
	each resource (in a non-nested manner) at most once and a total order $\prec$ in which tasks access the resources can be constructed
	over all tasks in the set.
	Hence,
	a flow-shop pattern means that
	$\sigma_{i,j'} \prec\sigma_{i,j}$ when $j' < j$ and $\theta_{i,j'}$
	and $\theta_{i,j}$ are both critical sections. 
	In such a case, we can assume that the mutex locks
	are indexed according to the specified order. However, while tasks have to
	respect the order $\prec$  when accessing the resources, mutex locks that are
	not needed may be skipped. 
	\item 
	\textbf{Job-Shop Compatible Access Patterns}
	allow tasks to accesses shared resources multiple times
	and without any restriction on the order in which resources are accessed. 
\end{compactitem}
\emph{Flow-shop compatible access patterns} are a very restrictive special case
and of the much more general \emph{job-shop  compatible access patterns}. We 
implicitly assume job-shop compatible access patterns if not specified
differently, but examine flow-shop compatible access patterns when showing
certain complexity  results.

\section{Computational Complexity Analysis}
\label{sec:computational-complexity}

In this section, we provide a short overview of  
results regarding job shop and flow shop problems in the literature at first.
Afterwards, we explain the connection of the \tsocs{} schedulability problem to the
job and flow shop problem by showing different reductions that can be
later applied for demonstrating different scenarios with respect to their computational
complexity.  
\subsection{Literature Review of Shop Scheduling}
\label{sec:computational-complexity-shops}

Since the late 1950s, many 
computational complexity results, approximation algorithms, heuristic
algorithms, and tools for job and flow shop scheduling problems have been
established.
Intractability results have been well-established even for severely
restricted instances of job shop or flow shop problems. 
The reader is referred to the surveys by 
Lawler et al.~\cite{lawler-survey-1993} and Chen et al.~\cite{Chen1998} for details.

Specifically, the following restricted scenarios are \mbox{${\mathcal
		NP}$-complete} in the strong sense:
\begin{compactitem}
	\item $J_2||C_{\max}$, see 
	\cite{LenstraRinnooy-Kan:79:Computational-complexity}.
	\item \mbox{$J_3|p_{i,j}=1|C_{\max}$}, i.e., unit execution time, see~\cite{LenstraRinnooy-Kan:79:Computational-complexity}.
	\item \mbox{$J_3|n=3|C_{\max}$}, i.e., 3 jobs with multiple
	operations on 3 shops, 
	see~\cite{SotskovShakhlevich:95:NP-hardness-of-shop-scheduling}.
	\item $F_3||C_{\max}$, i.e., three-stage flow shop \cite{b:Gary79}.
	\item \mbox{$F_2|r_j|C_{\max}$}, i.e., two-stage flow shop with arrival times,
	as shown in~\cite{LenstraRinnooy-Kan:79:Computational-complexity}.
	\item \mbox{$F_2|p_{i,j}=1, t_j|C_{\max}$}, i.e., two-stage flow shop
	with unit processing time and transportation time between the
	finishing time of the first 
	and the starting time of the
	second stage 
	\cite{DBLP:journals/scheduling/YuHL04}.
\end{compactitem}
The best polynomial-time approximation algorithm for the general
$J_M||C_{\max}$ problem was provided by Shmoys et
al.~\cite{DBLP:journals/siamcomp/ShmoysSW94}, showing an
approximation ratio of $O\left(\frac{\log^2 (M\mu)}{\log\log (M\mu)}\right)$,
where $M$ is the number of shops and $\mu$ is the maximum number of
operations per job. The approximation ratio of this algorithm was later
improved by Goldberg et al. \cite{DBLP:journals/siamdm/GoldbergPSS01}, showing
a ratio of $O\left(\frac{\log^2 (M\mu)}{(\log\log (M\mu))^2}\right)$.

Whether there exists a polynomial-time algorithm with a constant
approximation ratio for the general $F_M||C_{\max}$ or $J_M||C_{\max}$ problem
remained 
open until 2011, when
Mastrolilli~and~Svensson~\cite{Mastrolilli:2011:HAF:2027216.2027218}
showed that $F_M||C_{\max}$ (and hence $J_M||C_{\max}$) does not
admit any polynomial-time approximation algorithm with a constant
approximation ratio. 
Moreover, they also showed that the lower bound
on the approximation ratio is very close to the existing upper bound
provided by Goldberg et al. \cite{DBLP:journals/siamdm/GoldbergPSS01}.

In Sec.~\ref{sec:complexity-small-M}, we demonstrate
that the \tsocs{} schedulability problem is already ${\mathcal
	NP}$-complete in the strong sense for very restrictive scenarios, even when $M$ and $Z$ are both extremely small. In
Sec.~\ref{sec:complexity-partitioned-big-M}, we further reduce from
the master-slave problem
\cite{DBLP:journals/scheduling/YuHL04} to
show that the \tsocs{} schedulability problem is ${\mathcal NP}$-complete
in the strong sense even when there are two critical sections that access the unique shared resource with
unit execution time per task.

\subsection{Reductions from the Job/Flow Shop Problem}
\label{sec:connection-to-jobshop}

Chen et al. \cite{Chen-Dependency-RTSS18} showed that a special case
of the \tsocs{} makespan problem is ${\mathcal NP}$-hard in the strong
sense when a task has only one critical section and $M$ is
sufficiently large. The \tsocs{} schedulability problem is the
decision version of the \tsocs{} makespan problem. We therefore
focus on the hardness of the decision version in
Definition~\ref{def:problem-decision-version-def}. 
Here, we provide reductions from the job/flow shop scheduling problems
to different restricted scenarios of the \tsocs{} schedulability
problem. 
Such reductions are used in
Sec.~\ref{sec:complexity-small-M} for demonstrating the ${\mathcal
	NP}$-completeness for different scenarios. 
We start from the more general scenario under the
semi-partitioned scheduling paradigm.

\begin{theorem}
	\label{theorem:makespan-np-hard}
	Under the semi-partitioned scheduling paradigm, there is a polynomial-time
	reduction 
	from an input instance of the decision version of
	the job shop scheduling problem $J_Z||C_{\max}$ with $Z$ shops to an
	input instance of the \tsocs{} schedulability problem
	that has $Z$ mutex locks on $M$ processors with $M \geq Z$.
\end{theorem}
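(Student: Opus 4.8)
The plan is to exhibit a direct structural correspondence between the two problems in which the $Z$ machines of the job shop play the role of the $Z$ mutex locks, and to realize this correspondence under semi-partitioning by pinning each lock to its own processor. Concretely, given a decision instance of $J_Z||C_{\max}$ with $Z$ machines, jobs $1,\ldots,n$, and makespan bound $D$, I would build a \tsocs{} instance with $n$ tasks on $M\geq Z$ processors, $Z$ mutex locks, and common deadline $D$. Task $\tau_i$ mirrors job $i$: for each operation $O_{i,j}$ executed on machine $m(i,j)$ with processing time $p_{i,j}$, create a segment $\theta_{i,j}$ that is a critical section ($\lambda_{i,j}=1$) guarded by $\sigma_{m(i,j)}$ with WCET $C_{i,j}=p_{i,j}$, preserving the chain order. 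Since every segment is a critical section, the model restriction forbidding two consecutive non-critical sections is vacuously satisfied, and the construction is plainly computable in polynomial time.

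Next I would prove the two directions of the equivalence, namely that the job shop admits a schedule of makespan at most $D$ if and only if the \tsocs{} instance admits a feasible schedule meeting deadline $D$. For the forward direction, take a job shop schedule meeting $D$ and define the \tsocs{} schedule by placing every critical section guarded by $\sigma_m$ on processor $m$ (the semi-partitioned assignment, using $Z\leq M$ processors and leaving the rest idle), executing it over exactly the interval in which the corresponding operation runs on machine $m$. The chain constraint yields sequential execution within each task (so no task runs in parallel), and machine exclusivity on $m$ becomes mutual exclusion on $\sigma_m$; hence all \tsocs{} constraints hold and the makespan is unchanged.

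For the backward direction, take a feasible \tsocs{} schedule and read off, for each lock $\sigma_m$, the mutually exclusive execution intervals of the critical sections it guards as the activity of machine $m$; mutual exclusion gives at most one operation per machine at a time, and sequential task execution gives the chain order, so this is a valid job shop schedule of the same makespan.

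The main obstacle I anticipate is making the backward direction exact despite the generality of the \tsocs{} schedule model: such a schedule may use all $M\geq Z$ processors and may migrate or preempt, whereas the target job shop schedule lives on exactly $Z$ machines. The first part — that extra processors grant no illegitimate concurrency — is immediate, since concurrency among critical sections sharing a lock is forbidden by mutual exclusion regardless of the number of processors, so each lock still behaves as a single machine. The more delicate part is preemption: I would argue that one may restrict attention to schedules in which each critical section executes non-preemptively, so that the induced job shop schedule is itself non-preemptive with makespan at most $D$, thereby matching the standard non-preemptive $J_Z||C_{\max}$ from which the hardness is inherited. Establishing that this restriction is without loss of generality at the makespan threshold is the crux; once it is in place, the two directions combine to give the claimed polynomial-time reduction.
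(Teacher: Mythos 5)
Your reduction and your forward direction coincide with the paper's: $Z$ machines become $Z$ mutex locks, each operation $O_{i,j}$ becomes a critical section $\theta_{i,j}$ guarded by the lock of its machine with the same processing time and chain order, the deadline is the makespan bound $D$, and a job-shop schedule meeting $D$ is copied verbatim onto processors $1,\dots,Z$ (lock $z$ on processor $z$), which is a valid semi-partitioned \tsocs{} schedule. The backward direction is where you stop short. You correctly identify that a feasible \tsocs{} schedule may preempt and migrate a critical section, so that the induced activity on ``machine $z$'' need not be a non-preemptive job-shop schedule, and you declare establishing the harmlessness of this to be ``the crux'' --- but you do not establish it. That is precisely the only nontrivial step in the paper's proof, so as written the proposal has a genuine gap.

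The paper closes it with a short compaction argument you can adopt directly. Fix a lock $z$ and a critical section $\theta_{i,j}$ with $\sigma_{i,j}=z$, and let $a_{i,j}$ and $f_{i,j}$ be the first and last instants at which $\theta_{i,j}$ executes in the feasible \tsocs{} schedule. Mutual exclusion forces all critical sections guarded by $z$ to execute sequentially, so no other operation assigned to machine $z$ can run anywhere inside $[a_{i,j},f_{i,j}]$: throughout that window machine $z$ is either executing $O_{i,j}$ or idle. One may therefore left-compact, scheduling $O_{i,j}$ contiguously on $[a_{i,j},\,a_{i,j}+C_{i,j})$ and idling on $[a_{i,j}+C_{i,j},\,f_{i,j})$. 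This moves no work later, so the bound $D$ is preserved; the chain order within each job is preserved because $O_{i,j+1}$ started no earlier than $f_{i,j}$ and the compacted $O_{i,j}$ finishes by $f_{i,j}$; and the result is a genuine non-preemptive job-shop schedule on $Z$ machines. Note this is not a ``without loss of generality'' normalization of the \tsocs{} schedule itself (which would require its own proof); it is a post-processing of the induced job-shop schedule, which is all the if-direction needs. With that paragraph added, your argument matches the paper's.
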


\begin{proof}
	The proof is based on a polynomial-time reduction from an instance of the job
	shop scheduling problem $J_Z||C_{\max}$ to the \tsocs{} schedulability
	problem. 
	We present a polynomial-time reduction from the job
	shop scheduling problem $J_Z||C_{\max}$ to the \tsocs{} schedulability
	problem. 
	Suppose 
	a given input instance with $n$ jobs
	of the job shop scheduling problem $J_Z||C_{\max}$.  
	\begin{compactitem}
		\item We have $Z$ shops with non-preemptive execution. 
		\item A job $i$ 
		is defined by a chain of $\eta_i$ sub-jobs, denoted as
		$O_{i,1}, O_{i,2}, \ldots, O_{i,\eta_i}$. 
		The processing time
		of $O_{i,j}$ is $C_{i,j}$.
		\item These $\eta_i$ operations should be executed in the specified
		order and $O_{i,m}$ is executed on one of the given $Z$ shops,
		i.e., on shop $s(O_{i,m})$, where $s(O_{i,m}) \in \setof{1, 2,
			\ldots, Z}$.
	\end{compactitem}
	The decision version of the job shop scheduling problem is to decide
	whether there is a non-preemptive schedule whose makespan is no more
	than a given $D$.
	The polynomial-time reduction to the \tsocs{} schedulability problem
	is as follows:
	\begin{compactitem}
		\item There are $M \geq Z$ processors. 
		\item There are $Z$ mutex locks, indexed as $1,2, \ldots,Z$.
		\item For a job $i$ of the input instance of the job shop scheduling
		problem, we create a task $\tau_i$, which is composed of $\eta_i$
		computation segments. The execution time of $\theta_{i,j}$ is the
		same as the processing time of the operation $O_{i,j}$. The mutex
		lock $\sigma_{i,j}$ used by $\theta_{i,j}$ is $s(O_{i,m})$.
		\item The deadline of the tasks is $D$ and the period is $T=D$.
	\end{compactitem}
	We denote the above input 
	instance of the job shop scheduling
	problem as $I$
	(the \tsocs{} schedulability problem as $I'$, respectively). We show that there exists a feasible schedule
	$\rho$ for $I$ (in the job shop scheduling problem) if and only if
	there exists a feasible schedule $\rho'$ for $I'$ (in the \tsocs{}
	schedulability problem).\footnote{Although we do not formally define
		the schedule function of the job shop scheduling problem, we
		believe that the context is clear enough by replacing the use of
		the computation segments with the operations.}
	
	\textbf{Only-if part}: Suppose $\rho$ is a feasible schedule for $I$, i.e.,
	\begin{equation}
	\label{eq:only-if-makespan-hardness-eq1}
	\left(\sum_{m=1}^{Z} \int_{0}^{D} [
	\rho(t, m) = O_{i,j}] dt \right)= C_{i,j}, \forall O_{i,j}
	\end{equation}
	and $\rho(t,m)\neq O_{i,j}$ for any $t$ and $m$ if $s(O_{i,j}) \neq
	m$. Since the 
	execution on shops ins non-preemptive,
	if two operations $O_{i,j}$ and $O_{k, \ell}$ are supposed to be
	executed on a shop $z$, they are executed sequentially in $\rho$. As
	a result, without any conflict, for $0 \leq t \leq D$, we can set
	\begin{equation}
	\label{eq:only-if-makespan-hardness-eq2}
	\rho'(t,m) =
	\begin{cases}
	\bot & \mbox{if }\rho(t,m) = \bot\\
	\theta_{i,j} & \mbox{if }\rho(t,m) = O_{i,j}\\
	\end{cases}
	\end{equation}
	In the schedule $\rho'$, critical sections guarded by the mutex
	lock $z$ are executed sequentially on the $z$-th
	processor. Therefore, 
	\begin{equation}
	\label{eq:only-if-makespan-hardness-eq3}
	\left(\sum_{m=1}^{Z} \int_{0}^{D} [
	\rho'(t, m) = \theta_{i,j}] dt \right)= C_{i,j}, \forall
	\theta_{i,j} \in {\bf \Theta}
	\end{equation}
	and all the constraints for a feasible schedule for $I'$ are met.
	Such a schedule is a semi-partitioned and non-preemptive schedule
	(from the sub-job's perspective), which is also a global preemptive
	schedule (from the job's perspective).
	
	\textbf{If part}: Suppose that $\rho'$ is a feasible schedule for $I'$, i.e., 
	\begin{equation}
	\label{eq:if-makespan-hardness-eq1}
	\sum_{m=1}^{M} \int_{0}^{D} [
	\rho'(t, m) = \theta_{i,j}] dt = C_{i,j}, \forall \theta_{i,j} \in {\bf \Theta}   
	\end{equation}
	and the schedule $\rho'$ executes any two critical sections
	$\theta_{i,j}$ and $\theta_{k,\ell}$ with $\sigma_{i,j}=\sigma_{k,\ell}=z$ 	
	sequentially. 
	Therefore, for a mutex lock $z \in
	\setof{1,2, \ldots,Z}$, the critical sections guarded by $z$ must be
	sequentially executed.  As a result, without any conflict, for $0
	\leq t \leq D$, we can set
	\begin{equation}
	\label{eq:if-makespan-hardness-eq2}{\small
		\rho(t,z) =
		\begin{cases}
		O_{i,j} & \mbox{if } \exists m \mbox{ with }\rho'(t,m) = \theta_{i,j} \mbox{ and }
		\sigma_{i,j} = z\\
		\bot & \mbox{otherwise }\\
		\end{cases}}
	\end{equation}
	However, since we do not put any constraint on the feasible schedule
	$\rho'$, it is possible that the execution of $O_{i,j}$ on shop $z$
	is not continuous. Suppose that $a_{i,j}$ ($f_{i,j}$, respectively)
	is the first (last, respectively) time instant when $O_{i,j}$ is
	executed on shop $z$ in $\rho$. Since the schedule $\rho'$ executes
	any two critical sections $\theta_{i,j}$ and $\theta_{k,\ell}$
	sequentially when $\sigma_{i,j}=\sigma_{k,\ell}=z$, we know that for
	any $t$ between $a_{i,j}$ and $f_{i,j}$ either \mbox{$\rho(t,z) = O_{i,j}$}
	or $\rho(t,z) = \bot$. Therefore, we can simply set $\rho(t,z)$ to
	$O_{i,j}$ for any $t$ in the time interval $[a_{i,j}, a_{i,j}+C_{i,j})$ and
	set $\rho(t,z)$ to $\bot$ for any $t$ in $[a_{i,j}+C_{i,j},
	f_{i,j})$. The resulting schedule $\rho$ executes all the operations
	non-preemptively on the corresponding shops. Therefore, all the
	scheduling constraints of the job shop scheduling problem are met
	and 
	\begin{equation}
	\label{eq:if-makespan-hardness-eq3}
	\left(\sum_{m=1}^{Z} \int_{0}^{D} [
	\rho(t, m) = O_{i,j}] dt \right)= C_{i,j}, \forall O_{i,j}
	\end{equation}
	We note that there is no specific constraint of scheduling imposed
	by the schedule $\rho'$.
\end{proof}

The proof of Theorem~\ref{theorem:makespan-np-hard} is not valid for
the more restrictive partitioned scheduling paradigm, i.e., all the computation
segments of a task must be executed on the same processor, since the
constructed schedule $\rho'$ in the proof of the only-if part is not a
partitioned schedule. 
Interestingly, if we use 
an abundant number of processors, i.e., $M \geq n$, then the reduction in
Theorem~\ref{theorem:makespan-np-hard} holds for the partitioned
scheduling paradigm as well.

\begin{theorem}
	\label{theorem:makespan-np-hard-partitioned}
	Under the partitioned scheduling paradigm, there is a polynomial-time
	reduction which reduces from an input instance of the decision version of
	the job shop scheduling problem $J_Z||C_{\max}$ with $Z$ shops to an
	input instance of the \tsocs{} schedulability problem
	that has $n$ tasks and $Z$ mutex locks on $M$ processors with $M \geq n \geq Z$.
\end{theorem}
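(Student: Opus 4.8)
The plan is to reuse the mapping from the proof of Theorem~\ref{theorem:makespan-np-hard} almost verbatim --- one task $\tau_i$ per job $i$, one segment $\theta_{i,j}$ per operation $O_{i,j}$ with $C_{i,j}$ equal to the processing time of $O_{i,j}$, and mutex lock $\sigma_{i,j}=s(O_{i,j})$, so that every segment is a critical section and each mutex lock corresponds to exactly one shop --- and to change only the processor assignment used in the forward (only-if) construction. The decisive new ingredient is the abundance of processors: since $M \geq n$, I can dedicate a distinct processor to each task, which is precisely what makes the constructed schedule partitioned rather than merely semi-partitioned. I would then show, as before, that a feasible job shop schedule for the instance $I$ exists if and only if a feasible partitioned \tsocs{} schedule for the reduced instance $I'$ exists, both with makespan at most $D$.

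For the only-if direction I would take a feasible job shop schedule $\rho$ and assign task $\tau_i$ entirely to processor $i$ (legal because $M\geq n$), setting $\rho'(t,i)=\theta_{i,j}$ exactly when $\rho(t,s(O_{i,j}))=O_{i,j}$ and letting no other processor run any part of $\tau_i$. Because all segments of $\tau_i$ now live on the single processor $i$, the schedule $\rho'$ is partitioned by construction, and the sequentiality of a job's operations in $\rho$ immediately yields the sequential, non-overlapping execution of the segments of $\tau_i$ (so there is no double-booking on processor $i$). The only constraint not inherited for free is the mutual exclusion of critical sections sharing a lock; here I would invoke the observation that two operations assigned to the same shop $z$ are executed sequentially in $\rho$, since a shop processes one operation at a time, and since these map to critical sections guarded by the same lock $z$, their images never overlap in $\rho'$. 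This is the heart of the argument and the step that genuinely differs from Theorem~\ref{theorem:makespan-np-hard}: the serialization required by a lock is enforced by the lock semantics of the \tsocs{} model \emph{independently of where} the two critical sections are placed, so it survives the switch from a lock-indexed to a task-indexed processor assignment.

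For the if direction I expect to reuse the argument of Theorem~\ref{theorem:makespan-np-hard} unchanged: given any feasible \tsocs{} schedule $\rho'$ (partitioned or not), I recover a job shop schedule by placing, on shop $z$, the critical sections guarded by lock $z$, which are serialized in $\rho'$ by feasibility, and then compress each operation's possibly non-contiguous execution into a single contiguous block of length $C_{i,j}$ starting at its first instant. Since non-preemptive shop execution and the correct total processing times are thereby restored, and since the partitioned structure of $\rho'$ is never used, this direction carries over without modification, and the makespan bound $D$ is preserved.

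The main obstacle is conceptual rather than technical and lives entirely in the only-if direction: one must see that dedicating a whole processor to each task does not reintroduce any intra-task contention --- there is none, since a job is a chain --- while still respecting the across-processor lock constraints. Once the roles of ``processor'' (which now serializes a single task) and ``lock'' (which serializes a shop across tasks) are cleanly separated, the remaining work is the routine verification that $D$ is preserved in both directions and that the reduction is polynomial-time, as it copies the instance essentially verbatim; I would note that $M \geq n$ is the binding requirement, with $n \geq Z$ carried along to keep the statement consistent with the source problem.
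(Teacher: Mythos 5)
Your proposal is correct and matches the paper's own argument: the paper likewise keeps the reduction of Theorem~\ref{theorem:makespan-np-hard} intact and, using $M \geq n$, replaces the lock-indexed processor assignment in the only-if direction with the task-indexed one $\rho'(t,i)=\theta_{i,j}$ whenever some shop executes $O_{i,j}$ at time $t$, yielding a partitioned schedule while the if direction carries over unchanged. No substantive differences.
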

\begin{proof}
	The proof is identical to the proof of
	Theorem~\ref{theorem:makespan-np-hard} by ensuring that $\rho'$
	constructed in the only-if part in the proof of
	Theorem~\ref{theorem:makespan-np-hard} can be converted to a
	partitioned schedule. Instead of applying
	Eq.~\eqref{eq:only-if-makespan-hardness-eq2},
	since $M \geq n$, without any conflict, for $0 \leq t \leq D$ and
	$i=1,2,\ldots,n$, we can set
	\begin{equation}
	\label{eq:only-if-makespan-hardness-v2-eq2}
	\rho'(t,i) =
	\begin{cases}
	\bot &  \mbox{if } \nexists m \mbox{ with }\rho(t,m) = O_{i,j}\\
	\theta_{i,j} & \mbox{if } \exists m \mbox{ with }\rho(t,m) = O_{i,j}\\
	\end{cases}
	\end{equation}
	Since all computation segments of $\tau_i$ are executed on processor~$i$, 
	the schedule $\rho'$ is a partitioned schedule. All the
	remaining analysis follows 
	the proof of
	Theorem~\ref{theorem:makespan-np-hard}.
\end{proof}

\begin{theorem}
	\label{theorem:makespan-np-hard-basedon-FS}
	There is a polynomial-time reduction which reduces from an input
	instance of the decision version of the flow shop scheduling problem
	$F_Z||C_{\max}$ with $Z$ flow shops to an input instance of the
	\tsocs{} schedulability problem that has $Z$ mutex locks with a flow-shop
	compatible access pattern.
	The conditions in Theorems~\ref{theorem:makespan-np-hard}~and~\ref{theorem:makespan-np-hard-partitioned}
	for different scheduling paradigms with respect to constraint of
	$M$ remain the same.
\end{theorem}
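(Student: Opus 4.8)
The plan is to mirror the reduction from the proof of Theorem~\ref{theorem:makespan-np-hard} (and Theorem~\ref{theorem:makespan-np-hard-partitioned} for the partitioned paradigm), replacing the job shop instance by a flow shop instance and additionally verifying that the constructed \tsocs{} instance has a flow-shop compatible access pattern. Given an input instance $I$ of the decision version of $F_Z||C_{\max}$ with $n$ jobs, recall that in a flow shop every job $i$ consists of exactly $Z$ operations $O_{i,1}, O_{i,2}, \ldots, O_{i,Z}$, where $O_{i,m}$ is executed on the $m$-th machine and the operations must be processed in index order. I would build the \tsocs{} instance $I'$ exactly as before: introduce $Z$ mutex locks indexed $1, 2, \ldots, Z$, and for each job $i$ create a task $\tau_i$ with $\eta_i = Z$ computation segments, where each $\theta_{i,m}$ is a critical section whose execution time equals the processing time of $O_{i,m}$ and whose lock is $\sigma_{i,m} = m$.

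Next I would argue that $I'$ has a flow-shop compatible access pattern. Because the machine order in a flow shop is identical for every job, each task $\tau_i$ locks the mutexes in the common order $1 \prec 2 \prec \cdots \prec Z$, and each lock is accessed at most once per task (operations with zero processing time correspond to skipped locks). Hence the single total order $\prec$ required by the definition exists and coincides with the index order of the locks, so $I'$ is a legitimate flow-shop compatible instance.

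Finally, I would invoke the fact, already noted in the excerpt, that a flow shop is a special case of a job shop, so the only-if and if directions carry over verbatim from the proof of Theorem~\ref{theorem:makespan-np-hard}. Mapping machine $z$ to mutex lock $z$, a feasible flow shop schedule $\rho$ translates into a feasible \tsocs{} schedule $\rho'$ via Eq.~\eqref{eq:only-if-makespan-hardness-eq2} in the semi-partitioned case, or via Eq.~\eqref{eq:only-if-makespan-hardness-v2-eq2} using $M \geq n$ in the partitioned case; conversely a feasible \tsocs{} schedule yields a feasible flow shop schedule after compacting the possibly non-continuous execution on each lock into a contiguous block, exactly as in the if part. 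The constraints on $M$, namely $M \geq Z$ for the semi-partitioned and $M \geq n \geq Z$ for the partitioned paradigm, are inherited unchanged.

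The step I expect to require the most care is the verification that the constructed instance genuinely meets the flow-shop compatible access pattern definition, in particular confirming the existence of one total order $\prec$ respected by all tasks and that each resource is accessed at most once per task. This is precisely where reducing from a flow shop (rather than a general job shop) is essential, since an arbitrary job shop instance need not admit such a total order; every other part of the argument is a direct transcription of the two earlier proofs.
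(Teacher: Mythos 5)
Your proposal matches the paper's argument: the paper also reuses the reductions of Theorems~\ref{theorem:makespan-np-hard} and~\ref{theorem:makespan-np-hard-partitioned} verbatim and merely adds the condition that the $Z$ mutex locks are accessed in index order, which is exactly your flow-shop-compatibility check. Your write-up is just a more detailed transcription of the same reduction, so it is correct and essentially identical in approach.
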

\begin{proof}
	The proof is identical to the proofs of
	Theorems~\ref{theorem:makespan-np-hard}~and~\ref{theorem:makespan-np-hard-partitioned}. The
	additional condition is to access to the $Z$ mutex
	locks by following the index, starting from $1$.
\end{proof}

The above theorems show that the computational complexity of the
\tsocs{} schedulability problem is almost independent from the number
of processors (i.e., adding processors may not be helpful) and the
underlying scheduling paradigm.  The fundamental problem is the
sequencing of the critical sections.

\subsection{Computational Complexity for Small $M$}
\label{sec:complexity-small-M}

We can now reach the computational
complexity of the \tsocs{} schedulability problem when $Z \geq 2$ for
small $M$. 
For completeness,
we state the following lemma.

\begin{lemma}
	\label{lemma:makespan-in-NP}
	The \tsocs{} schedulability problem is in ${\mathcal NP}$.
\end{lemma}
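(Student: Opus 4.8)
The plan is to produce, for every YES-instance, a certificate of size polynomial in the input together with a polynomial-time verification procedure. The purely combinatorial content of any feasible schedule is the \emph{sequencing} of the critical sections: for each mutex lock $z\in\{1,\dots,Z\}$ the mutual-exclusion constraint forces the critical sections guarded by $z$ into a total order. I would therefore let the first component of the certificate be, for each lock, a permutation of the critical sections using it. Because the total number of computation segments $N=\sum_i \eta_i$ is polynomial in the input, all $Z$ permutations together are of polynomial size. In the benign regime where the number of processors does not bind (e.g.\ $M\geq Z$, as in Theorem~\ref{theorem:makespan-np-hard}), these orderings together with the fixed within-job chain order of the segments of each $\tau_i$ orient a precedence DAG on ${\bf \Theta}$ (a cyclic, hence infeasible, orientation is rejected), and the smallest achievable makespan is simply the longest weighted path; the deadline test then reduces to a single longest-path computation, which is polynomial, so membership in ${\mathcal NP}$ is immediate in that case.

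The difficulty is the general problem, where $M$ may be small and the capacity constraint binds. Since the model lets a segment be executed in several pieces (preemption) that may even sit on different processors (migration), a YES-instance may fail to be schedulable non-preemptively, so the orderings alone no longer determine feasibility in polynomial time. To handle this I would augment the certificate with an explicit schedule presented in a normalized piecewise-constant form: breakpoints $0=t_0<t_1<\dots<t_k=D$ together with, for every interval and every processor $m$, the segment (or $\bot$) run throughout that interval. The checker then verifies directly that the accumulated length assigned to each $\theta_{i,j}$ equals $C_{i,j}$, that no job occupies two processors in the same interval, that the segments of each job occur in index order, that on each lock the critical sections appear one at a time in the prescribed order, and that $t_k\le D$. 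Each test is a scan over the $O(kM)$ interval-processor cells, so verification is polynomial as long as $k$ and the encoding lengths of the $t_\ell$ are polynomially bounded.

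The main obstacle is exactly this last guarantee: a \emph{normal-form lemma} asserting that whenever some (possibly heavily preempted) feasible schedule meets $D$, there is a feasible schedule respecting the same per-lock orderings whose breakpoints are polynomially many and are rationals of polynomially bounded bit-length. I would establish it by fixing such a schedule, keeping its induced orderings, and writing the amounts of each segment processed on each processor within each order-consistent time slot as the variables of a linear program whose constraints---completion demands $C_{i,j}$, per-slot processor capacities, per-slot single-processor-per-job bounds, and the fixed precedences---have integer coefficients bounded by the input. A basic feasible solution of this program is piecewise-constant with polynomially many pieces and has rational coordinates of polynomially bounded size, and this is precisely the normalized schedule placed in the certificate. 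With this structural lemma in hand, the verification procedure above yields membership in ${\mathcal NP}$.
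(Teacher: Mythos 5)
The paper's entire proof of this lemma is one sentence: ``the feasibility of a given schedule can be verified in polynomial time.'' You have correctly identified that this glosses over the actual content of the claim: in the model of Sec.~\ref{sec:problem-definition} a schedule is an arbitrary function $\rho:\mathbb{R}\times M\to{\bf\Theta}\cup\{\bot\}$ that may preempt and migrate segments, so it is not a priori a polynomial-size object, and ${\mathcal NP}$-membership requires a \emph{normal-form} argument showing that every YES-instance admits a feasible schedule with polynomially many breakpoints of polynomial bit-length. Your certificate architecture (per-lock total orders, plus an explicit piecewise-constant schedule, plus a direct polynomial-time check of the demand, single-processor-per-job, chain-order, lock-exclusion, and deadline constraints) is the right one and is strictly more careful than what the paper offers.

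The soft spot is the last step of your normal-form lemma, where you identify a basic feasible solution of the aggregate LP with ``precisely the normalized schedule.'' A BFS of that LP gives per-slot \emph{amounts} $x_{i,j,m,s}$ satisfying capacity constraints; it is not itself a time-indexed assignment, and the per-slot capacity inequalities (processor load $\le L$, per-job load $\le L$, per-lock load $\le L$) are necessary but not obviously sufficient for realizing those amounts inside a slot of length $L$ without violating the simultaneity exclusions and the within-slot precedences (a segment $\theta_{i,j+1}$ may only run after $\theta_{i,j}$ completes, and two same-lock sections receiving positive amounts in one slot must still be serialized in the guessed order). Closing this requires an explicit decomposition step --- e.g., refine the slots at segment-completion and lock-handover events so that the segments active in each refined slot form an antichain, observe that the residual conflict structure (job classes and lock classes) is the line graph of a bipartite multigraph so that a K\H{o}nig/Carath\'eodory-type fractional edge-coloring with color classes of size at most $M$ realizes the amounts, and account for the polynomially many extra preemptions this introduces. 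None of this is exotic, but as written the lemma you call ``the main obstacle'' is asserted rather than proved at exactly the point where it could fail. By contrast, the paper avoids the issue entirely by treating ``a given schedule'' as if it were already a finite object; your route is the more honest one, but it needs this one additional argument to be complete.
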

\begin{proof}
	Since the feasibility of a given schedule for the \tsocs{}
	schedulability problem can be verified in polynomial-time, it is in
	${\mathcal NP}$.
\end{proof}

The following four theorems are based on the reductions in
Theorem~\ref{theorem:makespan-np-hard}~and~Theorem~\ref{theorem:makespan-np-hard-basedon-FS}.
In general, even very special cases are ${\mathcal NP}$-complete in
the strong sense.

\begin{theorem}
	\label{theorem:makespan-np-hard-Z=2}
	Under the semi-partitioned scheduling paradigm, the \tsocs{}
	schedulability problem is ${\mathcal NP}$-complete in the strong sense
	when $Z=M=2$.
\end{theorem}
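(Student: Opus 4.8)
The plan is to combine membership in ${\mathcal NP}$, already settled by Lemma~\ref{lemma:makespan-in-NP}, with a hardness argument obtained by specializing Theorem~\ref{theorem:makespan-np-hard} to its smallest admissible parameters. Since containment follows directly from the lemma, the whole effort concentrates on proving strong ${\mathcal NP}$-hardness for the single configuration $Z=M=2$.

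For the hardness direction I would instantiate Theorem~\ref{theorem:makespan-np-hard} with $Z=2$. The theorem then supplies a polynomial-time reduction from the decision version of $J_2||C_{\max}$ to a \tsocs{} schedulability instance with two mutex locks on any $M \geq Z$ processors under the semi-partitioned paradigm. Taking $M=2$ meets the constraint $M\geq Z$ with equality and lands exactly on the target configuration $Z=M=2$. Because the literature collected in \secref{sec:computational-complexity-shops} records that $J_2||C_{\max}$ is strongly ${\mathcal NP}$-complete~\cite{LenstraRinnooy-Kan:79:Computational-complexity}, the only remaining point is to confirm that the reduction transmits \emph{strong} hardness rather than merely ordinary ${\mathcal NP}$-hardness.

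The key check is that the reduction of Theorem~\ref{theorem:makespan-np-hard} is numerically faithful: each operation $O_{i,j}$ with processing time $C_{i,j}$ becomes a computation segment $\theta_{i,j}$ of the identical execution time, the bound $D$ is copied verbatim, and the shop label $s(O_{i,j})$ becomes the lock index $\sigma_{i,j}$. No quantity is rescaled or re-encoded, so the largest integer occurring in the constructed instance is no larger than the largest integer in the source instance. Hence the reduction is polynomial even under unary encoding, which is precisely what is required to upgrade the conclusion from ordinary to strong ${\mathcal NP}$-completeness. Together with Lemma~\ref{lemma:makespan-in-NP}, this yields the claim.

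I do not anticipate a genuine obstacle, since the statement is in essence a corollary of Theorem~\ref{theorem:makespan-np-hard}. The one subtlety worth spelling out explicitly is the value-preservation of the reduction that promotes the result to the strong sense; everything else is the routine matching of the two-shop environment $J_2$ onto the two-lock, two-processor configuration.
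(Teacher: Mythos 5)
Your proposal matches the paper's own argument: the paper likewise invokes the strong ${\mathcal NP}$-completeness of $J_2||C_{\max}$ from~\cite{LenstraRinnooy-Kan:79:Computational-complexity} and combines it with the reduction of Theorem~\ref{theorem:makespan-np-hard} (membership in ${\mathcal NP}$ being covered by Lemma~\ref{lemma:makespan-in-NP}). Your additional remark that the reduction preserves the numerical values, and hence transmits hardness in the strong sense, is a correct and worthwhile detail that the paper leaves implicit.
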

\begin{proof}
	The job shop scheduling problem $J_2||C_{\max}$ with 2 shops
	is ${\mathcal NP}$-complete in the strong sense
	\cite{LenstraRinnooy-Kan:79:Computational-complexity}. Together with
	Theorem~\ref{theorem:makespan-np-hard}, we conclude the theorem.
\end{proof}

The \tsocs{} schedulability problem
is also difficult when all computation segments have the same
execution time.

\begin{theorem}
	\label{theorem:makespan-np-hard-equal-execution-time}
	Under the semi-partitioned scheduling paradigm, the \tsocs{}
	schedulability problem is ${\mathcal NP}$-complete in the strong sense
	when $Z=M=3$ and $C_{i,j}=1$ for any computation segment $\theta_{i,j}$. 
\end{theorem}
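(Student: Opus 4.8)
The plan is to instantiate the reduction already established in Theorem~\ref{theorem:makespan-np-hard} with $Z=3$ and to feed it the unit-processing-time variant of the three-machine job shop problem. Recall from the literature review that $J_3|p_{i,j}=1|C_{\max}$ is ${\mathcal NP}$-complete in the strong sense~\cite{LenstraRinnooy-Kan:79:Computational-complexity}. Since the reduction in Theorem~\ref{theorem:makespan-np-hard} sets the execution time of each computation segment $\theta_{i,j}$ equal to the processing time of the corresponding operation $O_{i,j}$, starting from an instance in which every $p_{i,j}=1$ yields a \tsocs{} instance in which every $C_{i,j}=1$. Taking $M=Z=3$ satisfies the requirement $M\geq Z$ of that theorem, so the construction applies verbatim under the semi-partitioned paradigm.

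First I would note that the instance produced has exactly three mutex locks (one per shop) and only unit-length segments, so it is a legitimate instance of the restricted problem in the statement. Next I would invoke the equivalence already proved in Theorem~\ref{theorem:makespan-np-hard}: a feasible job-shop schedule of makespan at most $D$ exists if and only if a feasible \tsocs{} schedule meeting deadline $D$ exists. Nothing in that argument uses anything about the processing times beyond the sequential execution on each shop and on each lock, so it carries over unchanged to the unit-time case.

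Finally I would combine this hardness with membership in ${\mathcal NP}$ (Lemma~\ref{lemma:makespan-in-NP}) to conclude ${\mathcal NP}$-completeness. To upgrade it to \emph{strong} ${\mathcal NP}$-completeness, I would observe that the reduction introduces no large numbers: every segment length equals $1$ and the deadline $D$ is carried directly from the source instance, hence the largest integer appearing in the output is polynomially bounded in the input size. Consequently any pseudo-polynomial algorithm for this restricted \tsocs{} problem would yield one for $J_3|p_{i,j}=1|C_{\max}$, and strong ${\mathcal NP}$-hardness transfers.

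The main obstacle---really the only subtlety---is the bookkeeping needed to certify that strong (rather than merely ordinary) ${\mathcal NP}$-completeness is preserved. This amounts to verifying that the numeric parameters of the constructed instance stay polynomially bounded, which is immediate here since all execution times are unit and $D$ is inherited from the source instance without any numerical blow-up.
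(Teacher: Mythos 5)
Your proposal is correct and matches the paper's own proof: the paper likewise instantiates the reduction of Theorem~\ref{theorem:makespan-np-hard} with the strongly ${\mathcal NP}$-complete problem $J_3|p_{i,j}=1|C_{\max}$ and concludes via membership in ${\mathcal NP}$. Your additional remarks on the preservation of unit processing times and the polynomial boundedness of the numeric parameters are just a more explicit spelling-out of what the paper leaves implicit.
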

\begin{proof}
	The job shop scheduling problem \mbox{$J_3|p_{i,j}=1|C_{\max}$} with
	unit execution time on 3 shops is ${\mathcal NP}$-complete in
	the strong
	sense~\cite{LenstraRinnooy-Kan:79:Computational-complexity}.
	Together with 
	Theorem~\ref{theorem:makespan-np-hard}, we conclude the theorem.
\end{proof}

The following theorem shows that the \tsocs{} schedulability problem
is also difficult when there are just three tasks, three mutex
locks, and three processors.

\begin{theorem}
	\label{theorem:makespan-np-hard-three-tasks}
	The \tsocs{} schedulability problem is ${\mathcal NP}$-complete in the
	strong sense when $n=Z=M=3$.
\end{theorem}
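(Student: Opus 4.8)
The plan is to reuse the reduction machinery already assembled and instantiate it at a known strongly NP-complete special case of job shop scheduling. First I would recall that, among the restricted shop instances collected in Sec.~\ref{sec:computational-complexity-shops}, the problem $J_3|n=3|C_{\max}$ --- three jobs, each an arbitrary chain of operations, scheduled on three shops --- is ${\mathcal NP}$-complete in the strong sense by Sotskov and Shakhlevich~\cite{SotskovShakhlevich:95:NP-hardness-of-shop-scheduling}. This is the source problem, and the target is the \tsocs{} schedulability problem restricted to $n=Z=M=3$. The entire argument is then a matter of instantiating an existing polynomial-time reduction at the right parameters and checking that nothing is lost.

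The heart of the proof is a parameter check. Instantiating Theorem~\ref{theorem:makespan-np-hard} with $Z=3$ already yields a polynomial-time reduction from a $J_3||C_{\max}$ instance to a \tsocs{} instance with three mutex locks on any $M\geq 3$ processors; taking $M=3$ and feeding in the $n=3$ special case produces precisely $n=Z=M=3$. Since here $M=n=Z=3$ satisfies $M\geq n\geq Z$, I would instead invoke Theorem~\ref{theorem:makespan-np-hard-partitioned}, whose only-if construction places one task per processor and therefore establishes the same hardness under the more restrictive \emph{partitioned} paradigm; because a partitioned schedule is a special case of the semi-partitioned and global ones, this makes the conclusion hold independently of the scheduling paradigm, which is exactly why the statement names none. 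In either reduction the three shops become the three mutex locks, each job becomes a task, and the operation processing times are copied verbatim into the segment WCETs $C_{i,j}$, so no numerical value is enlarged.

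It then remains only to confirm that strong NP-hardness transfers, which is immediate: the reduction is computable in polynomial time, and since processing times are carried over unchanged, the magnitudes of the numbers in the constructed \tsocs{} instance are polynomially bounded by those of the job shop instance. Hence a pseudo-polynomial decision procedure for this \tsocs{} instance would give one for $J_3|n=3|C_{\max}$, contradicting its strong ${\mathcal NP}$-completeness unless ${\mathcal P}={\mathcal NP}$. Combining this hardness with membership in ${\mathcal NP}$ from Lemma~\ref{lemma:makespan-in-NP} yields strong ${\mathcal NP}$-completeness. I expect no genuine obstacle here --- the work is pure bookkeeping --- and the single point meriting care is the choice of reduction: because $M\geq n$ holds at $n=Z=M=3$, using the partitioned variant of Theorem~\ref{theorem:makespan-np-hard-partitioned} gives the strongest, paradigm-independent form of the result.
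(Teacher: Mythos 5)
Your proposal is correct and takes essentially the same route as the paper: both reduce from the strongly ${\mathcal NP}$-complete $J_3|n=3|C_{\max}$ result of Sotskov and Shakhlevich via the reduction of Theorem~\ref{theorem:makespan-np-hard} (handling the partitioned paradigm through the $M\geq n$ case, which the paper phrases as mapping a semi-partitioned schedule to a partitioned one and you phrase as invoking Theorem~\ref{theorem:makespan-np-hard-partitioned}), combined with Lemma~\ref{lemma:makespan-in-NP} for membership. One minor caution: hardness under the partitioned paradigm does not by itself yield hardness under the semi-partitioned one, since they are distinct decision problems, but this causes no gap because you explicitly invoke both reductions.
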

\begin{proof}
	The job shop scheduling problem \mbox{$J_3|n=3|C_{\max}$} with 3
	jobs (with multiple operations) on 3 shops is ${\mathcal
		NP}$-complete in the strong
	sense~\cite{SotskovShakhlevich:95:NP-hardness-of-shop-scheduling}.
	Together with
	Theorem~\ref{theorem:makespan-np-hard}, we conclude the theorem for
	semi-partitioned scheduling paradigm.
	
	For the partitioned scheduling paradigm, since there are exactly 3
	tasks, 3 processors, and 3 mutex locks, the computational complexity
	remains the same, as a semi-partitioned schedule can be mapped to a
	partitioned schedule.
\end{proof}


\begin{theorem}
	\label{theorem:makespan-np-hard-ordered-locks}
	Under the semi-partitioned scheduling paradigm, the \tsocs{}
	schedulability problem \emph{for flow-shop compatible access patterns} is
	${\mathcal NP}$-complete in the strong sense when $Z=M=3$.
\end{theorem}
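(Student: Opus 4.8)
The plan is to mirror the structure used in the preceding four theorems: locate an already-established strongly NP-complete shop problem with exactly three shops, and then invoke the appropriate reduction from Sec.~\ref{sec:connection-to-jobshop}. Since the statement concerns \emph{flow-shop compatible access patterns}, the reduction I would use is Theorem~\ref{theorem:makespan-np-hard-basedon-FS}, which is tailored to flow shops and preserves the $M$-constraints of Theorems~\ref{theorem:makespan-np-hard}~and~\ref{theorem:makespan-np-hard-partitioned}.

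First I would recall from the literature review in Sec.~\ref{sec:computational-complexity-shops} that the three-stage flow shop problem $F_3||C_{\max}$ is strongly NP-complete~\cite{b:Gary79}. This is the natural hard instance to start from, because it has precisely $Z=3$ stages, matching the target parameterization.

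Next I would apply Theorem~\ref{theorem:makespan-np-hard-basedon-FS} with $Z=3$. That theorem gives a polynomial-time reduction from the decision version of $F_Z||C_{\max}$ to the \tsocs{} schedulability problem with $Z$ mutex locks under a flow-shop compatible access pattern, while inheriting the $M \geq Z$ requirement of Theorem~\ref{theorem:makespan-np-hard} for the semi-partitioned paradigm. Instantiating $Z=3$ and choosing $M=3$ (the smallest admissible value under $M \geq Z$) yields an instance with $Z=M=3$. Because the reduction is polynomial and introduces no numbers of super-polynomial magnitude, strong NP-hardness is transferred intact; together with membership in ${\mathcal NP}$ from Lemma~\ref{lemma:makespan-in-NP}, strong NP-completeness follows.

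The proof is essentially a direct instantiation of an existing reduction, so there is no substantive technical obstacle; the only point warranting care is verifying that the parameters line up correctly, namely that the flow-shop reduction of Theorem~\ref{theorem:makespan-np-hard-basedon-FS} indeed keeps the semi-partitioned bound at $M \geq Z$ so that $M=Z=3$ is realizable, and that the flow-shop compatible ordering of the three locks (accessed by index starting from $1$, as in the proof of that theorem) is respected.
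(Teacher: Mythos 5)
Your proposal matches the paper's proof exactly: the paper likewise reduces from the strongly NP-complete $F_3||C_{\max}$ problem \cite{b:Gary79} via Theorem~\ref{theorem:makespan-np-hard-basedon-FS}, with membership in ${\mathcal NP}$ from Lemma~\ref{lemma:makespan-in-NP}. Your additional care about the $M \geq Z$ condition and the lock-index ordering is a correct (if more explicit) reading of what the paper leaves implicit.
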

\begin{proof}
	The flow shop scheduling problem $F_3||C_{\max}$ with 3 
	shops is ${\mathcal NP}$-complete in the strong sense
	\cite{b:Gary79}.  Together with Theorem~\ref{theorem:makespan-np-hard-basedon-FS}, we conclude
	the theorem.
\end{proof}

\subsection{Computational Complexity When $M \geq N$}
\label{sec:complexity-partitioned-big-M}

Chen et al. \cite{Chen-Dependency-RTSS18} showed that a special case
of the \tsocs{} makespan problem is ${\mathcal NP}$-hard in the strong
sense when a task has only one critical section and $M$ is
sufficiently large. The following theorem shows that the \tsocs{}
schedulability problem is 
\mbox{${\mathcal NP}$-complete} when there are only two critical
sections per task and the critical sections are with unit execution
time.

\begin{theorem}
	\label{theorem:makespan-np-hard-one-lock-multiple-segments}
	The \tsocs{} schedulability problem is \mbox{${\mathcal NP}$-complete} in the
	strong sense when $Z=1$, $\eta_i \geq 3$ for every $\tau_i \in
	\textbf{T}$, $C_{i,j}=1$ for every computation segment $\theta_{i,j}$
	with $\lambda_{i,j}=1$, and $M \geq N$.
\end{theorem}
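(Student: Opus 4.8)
The plan is to invoke Lemma~\ref{lemma:makespan-in-NP} for membership in ${\mathcal NP}$ and then to establish strong ${\mathcal NP}$-hardness by a polynomial-time reduction from the \emph{master-slave scheduling problem}, which is ${\mathcal NP}$-hard in the strong sense by Yu et al.~\cite{DBLP:journals/scheduling/YuHL04}. Recall that in this problem a single \emph{master} machine must perform, for each of $n$ jobs, a unit-time preprocessing operation and, later, a unit-time postprocessing operation, while job~$i$ is handled in between by a dedicated \emph{slave} for a job-specific duration~$t_i$; the master executes only one pre- or post-operation at a time, a job's postprocessing may not begin before its slave phase has finished, and the decision version asks whether all jobs complete by a given makespan~$D$ (with the $t_i$ polynomially bounded, so strong hardness is inherited).

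First I would set $Z=1$, take $M \geq N$ processors where $N=n$ is the number of tasks, and map each job~$i$ to a task $\tau_i$ with exactly three computation segments: a critical section $\theta_{i,1}$ with $C_{i,1}=1$ guarded by the single lock (the preprocessing), a non-critical section $\theta_{i,2}$ with $C_{i,2}=t_i$ (the slave phase), and a critical section $\theta_{i,3}$ with $C_{i,3}=1$ again guarded by the single lock (the postprocessing), setting the common deadline to $D$ and the period to $T=D$. Such an instance satisfies $\eta_i = 3 \geq 3$, has exactly two unit-length critical sections per task, and uses a single shared resource, so it lies inside the restricted class claimed by the theorem. The structural correspondence driving the reduction is that the unique lock plays the role of the master: by the \tsocs{} non-overlapping constraint every critical section of every task is serialized on it, just as the master serializes all pre- and post-operations. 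Because $M \geq N$, each task can be pinned to its own processor, so the non-critical slave segments never compete for a processor and each runs for exactly $t_i$ time units, mirroring the dedicated slaves; the mandatory order $\theta_{i,1}\to\theta_{i,2}\to\theta_{i,3}$ reproduces the precedence pre~$\to$~slave~$\to$~post, enforcing that the postprocessing starts only after the slave phase of length $t_i$ has elapsed.

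Next I would prove the equivalence of the two decision instances. For the only-if direction, a master-slave schedule of makespan at most $D$ is turned into a feasible \tsocs{} schedule by placing $\tau_i$ on processor~$i$ and executing $\theta_{i,1},\theta_{i,2},\theta_{i,3}$ at the times of the pre-, slave-, and post-operations of job~$i$; the master's serialization yields the lock constraint, the per-task order is respected, and the makespan is unchanged. The step I expect to be the main obstacle is the converse direction, since---unlike in Theorem~\ref{theorem:makespan-np-hard}---the \tsocs{} model lets a critical section be preempted and interleaved with others on the single lock, and the extracted master schedule must additionally respect the slave-induced delays. I would handle the preemption exactly as in the ``if part'' of the proof of Theorem~\ref{theorem:makespan-np-hard}: because no two same-lock critical sections ever run in parallel, each unit-length critical section can be compacted to run contiguously within the first unit of its lock-holding window, and the lock left idle for the remainder, without introducing any overlap. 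For the delays I would note that this compaction can only move a preprocessing completion earlier; placing each slave phase immediately after the (compacted) preprocessing and each postprocessing at the first instant its critical section held the lock in the original \tsocs{} schedule then guarantees that every postprocessing still starts no earlier than $t_i$ after its preprocessing, so all master-slave constraints hold and the resulting non-preemptive single-master schedule has makespan at most~$D$.
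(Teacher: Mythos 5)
Your proposal matches the paper's proof: the paper also establishes membership in ${\mathcal NP}$ by polynomial-time verifiability and reduces from the master-slave scheduling problem with unit execution time on the master~\cite{DBLP:journals/scheduling/YuHL04}, mapping each job to a task with two unit critical sections on the single lock separated by a non-critical slave segment, with deadline $D$ and $M\geq n$ processors. The only difference is that you spell out the equivalence argument (compaction of possibly preempted critical sections and preservation of the slave-induced delays), which the paper declares ``not difficult'' and omits for space.
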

\begin{proof}
	The problem is in ${\mathcal NP}$, since the feasibility of a given schedule can
	be verified in polynomial-time.
	Similar to the proof of Theorem~\ref{theorem:makespan-np-hard}, 
	we show a polynomial-time reduction from the master-slave scheduling problem
	with unit execution time on the master 
	\cite{DBLP:journals/scheduling/YuHL04}.
	Assume a given input instance with $n$ jobs of the
	master-slave scheduling problem:
	\begin{compactitem}
		\item 
		We assume a sufficient number of slaves, but only
		one master that can be modeled as a uniprocessor.  
		\item A job $i$ has a chain of three sub-jobs, in which the first and
		third sub-jobs have to be executed on the master and the second
		sub-job has to be executed on a slave.
		\item The processing time of the first and third sub-jobs of a job
		$i$ is $1$.  The processing time of the second sub-job of a job $i$
		is $O_i > 0$.
	\end{compactitem}
	The decision version of the master-slave scheduling problem is to
	decide whether there is a schedule whose makespan is no more than a
	given target $D$, which is ${\mathcal NP}$-complete in the strong sense
	\cite{DBLP:journals/scheduling/YuHL04}. The master-slave scheduling
	problem is equivalent to the uniprocessor self-suspension problem
	with two computation segments and one suspension interval.
	
	\noindent The polynomial-time reduction to the \tsocs{} schedulability problem
	is as follows:
	\begin{compactitem}
		\item There are $M \geq n$ processors.
		\item There is one mutex lock.
		\item For a job $i$ of the input instance of the master-slave
		scheduling problem, we create a task~$\tau_i$, which is composed
		of three computation segments. The execution time
		$C_{i,1}=C_{i,3}$ and $C_{i,2} = O_i$. Computation segments
		$\theta_{i,1}$ and $\theta_{i,3}$ are critical sections guarded by
		the only mutex lock. Computation segment $\theta_{i,2}$ is a
		non-critical section.
		\item The deadline of the tasks is $D$ and the period is $T=D$.
	\end{compactitem}
	It is not difficult to prove that  a feasible schedule
	$\rho$ for the original input of the master-slave scheduling problem exists
	if and only if there exists a feasible schedule $\rho'$ for the
	reduced input of the \tsocs{} schedulability problem. Details
	are omitted due to space limitation.
\end{proof}

%
%

\section{The DGA Based on Job/Flow Shop}
\label{sec:our-algorithm-based-on-shop}
In this section,  we 
detail the DGA for tasks with multiple critical
sections, based on job shop scheduling to construct a dependency graph. 
\begin{compactitem}
	\item In
	the first step, we construct a directed \emph{acyclic} graph $G = (V, E)$. For
	each sub-job $\theta_{i,j}$ of task $\tau_i$ in $\textbf{T}$, we create a vertex in $V$. The sub-job $\theta_{i,j}$
	is a predecesor of $\theta_{i,j+1}$ for
	$j=1,2,\ldots,\eta_i-1$. Suppose that ${\bf \Theta}^z$ is the set of
	the computation segments that are critical sections guarded by mutex
	lock $z$, i.e., \mbox{${\bf \Theta}^z \leftarrow
		\set{\theta_{i,j}}{\lambda_{i,j}=1 \mbox{ and }
			\sigma_{i,j}=z}$.} For each \mbox{$z = 1, 2, \ldots, Z$,} the subgraph of
	the computation segments in ${\bf \Theta}^z$ is a directed chain,
	which represents the total 
	execution order of these
	computation segments.
	\item In the second step, we construct a schedule of $G$ on $M$
	processors either globally or partitioned, either preemptive or
	non-preemptive.
\end{compactitem}
For a directed acyclic graph $G$, a \textbf{critical path} of $G$ is a
longest path of $G$, and its length
is denoted by $len(G)$.  We now explain how to reduce from an input instance
$I^{MS}$ of the \tsocs{} makespan problem to an input instance
$I^{JS}$ of the job shop scheduling problem $J_{Z+n}||C_{\max}$.
\begin{compactitem}
	\item We create $Z+n$ shops:
	\begin{compactitem}
		\item Shop 
		$z \in \{1, 2,\ldots, Z\}$
		is exclusively used to execute critical
		sections guarded by mutex lock $z$. That is, only critical
		sections $\theta_{i,j}$ with $\lambda_{i,j}=1$ and
		$\sigma_{i,j}=z$ (i.e., $\theta_{i,j} \in {\bf \Theta}^z$) can be
		executed on shop $z$.
		\item Shop $Z+i$ is exclusively used to execute non-critical
		sections of task $\tau_i$.  That is, only non-critical sections
		$\theta_{i,j}$ with $\lambda_{i,j}=0$ can be executed on shop
		$Z+i$.
	\end{compactitem}
	\item The operation of each computation segment $\theta_{i,j}$ is 
	transformed to the corresponding shop, and the processing time is the same as the
	segment's execution time, i.e., $C_{i,j}$.
\end{compactitem}
Suppose that $\rho^{JS}$ is a feasible job shop schedule for
$I^{JS}$. Since $\rho^{JS}$ is non-preemptive, the operations on a 
shop are executed sequentially in $\rho^{JS}$. The construction of the
dependency graph $G$ sets the precedence constraints of ${\bf
	\Theta}^z$ by following the total order of the execution of the
operations on shop $z$, i.e., the shop dedicated for
${\bf \Theta}^z$ in $\rho^{JS}$.

Once the dependency graph $G$ is constructed, a
schedule $\rho^{MS}$ of the original input instance $I^{MS}$ can
be generated by applying any scheduling algorithms to schedule $G$, as
already detailed in
\cite{Chen-Dependency-RTSS18,Shi-Dependency-RTAS2019}.
Specifically, for semi-partitioned scheduling, the LIST-EDF in~\cite{Shi-Dependency-RTAS2019}
based on classical list
scheduling by Graham~\cite{DBLP:journals/siamam/Graham69} can be
applied, i.e., whenever a processor idles and 
at least one sub-job is eligible,
the sub-job with the earliest deadline starts its execution on the processor. 
Additionally, its partitioned extension
in~\cite{Shi-Partitioned-RTCSA2019}  
(P-EDF) can be applied  
to generate the partitioned schedule. 

We assume each computation
segment/sub-task executes exactly its WCET for all the releases, i.e., early completion is forbidden, thus the schedule generated
for one hyper-period is static and repeated periodically.
Accordingly, an exact schedulability test is performed by simply evaluating
the LIST-EDF or P-EDF schedule over one hyper-period to check whether
there is any deadline miss.
Since the schedule is static and repeated periodically, there is no
dynamics that can lead to the multiprocessor 
anomalies pointed out by Graham~\cite{DBLP:journals/siamam/Graham69}.

\begin{figure}[t]
	\includegraphics[width=1\linewidth]{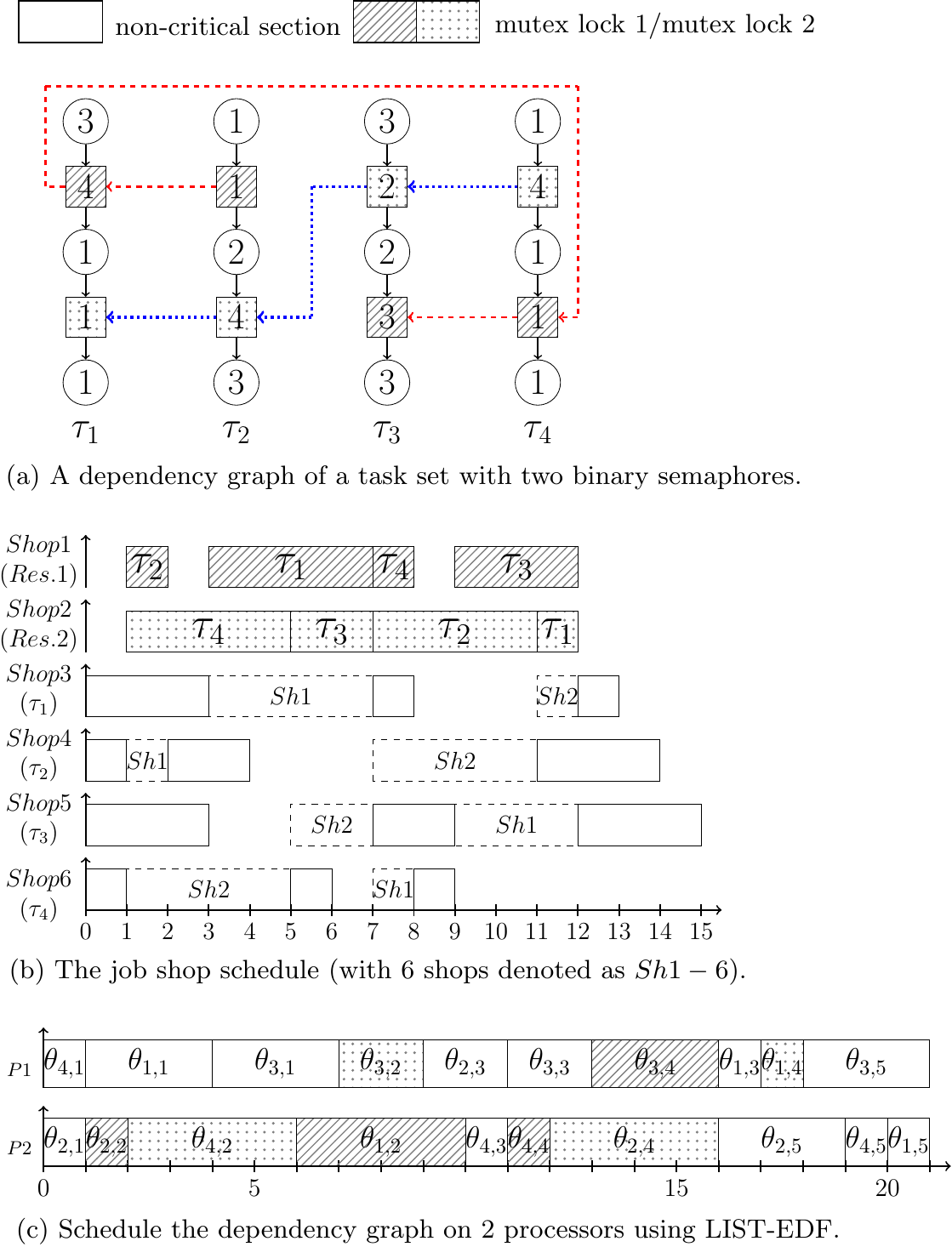}
	\caption{An example of the DGA based on job shop scheduling.}
	\label{fig:jobshop-example}
\end{figure}

\subsection{An Example of the DGA Based on Job Shop Scheduling}

To demonstrate the work flow of our approach, we
provide an illustrative example in
Fig.~\ref{fig:jobshop-example}.  Consider a frame-based task set consisting of four tasks and two shared
resources, 
where all tasks have the same period, i.e., $T_i=25$.
Each task consists of five segments, two critical sections (rectangles in
Fig.~\ref{fig:jobshop-example}~(a)) and three non-critical sections (circles).
These computation segments within one task have to be executed sequentially by following the pre-defined order (black solid
arrows in Fig.~\ref{fig:jobshop-example}~(a)).  Each of the critical
sections accesses one of the shared resources, protected
by mutex locks respectively. 
The numbers in the
circles and rectangles are the execution times of corresponding computation
segments. 

To construct a dependency graph for the task set, we apply 
job shop scheduling with~$6$ exclusively assigned 
shops: shop~$1$ and shop~$2$ are for the
critical sections of the two shared resources, and shops 3 to 6 are for
the non-critical sections of tasks $\tau_1$ to $\tau_4$. 
Hence, once a task needs to access the shared resource,
the execution will be migrated to the corresponding shops, e.g., to shop
1 for resource 1. 
This input instance
for $J_{Z+n}||C_{\max}$ is $I^{JS}$.

Fig.~\ref{fig:jobshop-example}~(b) shows a job shop schedule for $I^{JS}$.
The execution order for shared resources 1 and 2 in shop 1 and shop 2 is
according to the precedence constraints in Fig.~\ref{fig:jobshop-example}~(a), where
dashed red directed 
edges represent the precedence
constraints of mutex lock 1 and the dotted blue directed 
edges are the precedence
constraints of mutex lock~2.

The concrete schedule
is shown in Fig.~\ref{fig:jobshop-example}~(c), where the LIST-EDF presented in~\cite{Shi-Dependency-RTAS2019} is adopted
to generate the schedule on two processors. 

\subsection{Properties of Our Approach}

We now prove the equivalence of a schedule of $I^{JS}$ and a directed
acyclic graph $G$ for $I^{MS}$.

\begin{lemma}
	\label{lemma:graph-to-job-shop-schedule}
	Suppose that there is a directed acyclic graph $G$ for $I^{MS}$
	whose critical path length is $len(G)$. There is a job shop schedule
	for $I^{JS}$ whose makespan is $len(G)$.
\end{lemma}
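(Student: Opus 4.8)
The plan is to convert the DAG $G$ into its canonical \emph{as-soon-as-possible} (ASAP) schedule for $I^{JS}$ and then verify that this schedule (i) satisfies all job shop constraints and (ii) has makespan exactly equal to the longest weighted path $len(G)$. The guiding idea is that the two families of edges installed in $G$ (the intra-task chain edges $\theta_{i,j}\to\theta_{i,j+1}$ and the per-lock chain edges on each ${\bf \Theta}^z$) correspond precisely to the two families of constraints in $J_{Z+n}||C_{\max}$ (job precedence and per-shop sequencing), so respecting the precedence of $G$ automatically yields a feasible job shop schedule.

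First I would define, for each sub-job $\theta_{i,j}$, a finishing time $f_{i,j}$ by setting it to $C_{i,j}$ if $\theta_{i,j}$ has no predecessor in $G$, and otherwise to $\max\{f_{k,\ell} : (\theta_{k,\ell},\theta_{i,j})\in E\} + C_{i,j}$; the start time is $s_{i,j}=f_{i,j}-C_{i,j}$. Since $G$ is acyclic, these values are well defined by processing vertices in topological order, and by construction $f_{i,j}$ equals the length (sum of the $C$-weights) of the longest path in $G$ ending at $\theta_{i,j}$. This assigns each operation of $I^{JS}$ a concrete execution interval $[s_{i,j},f_{i,j})$ on its designated shop.

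Next I would establish feasibility for $I^{JS}$. For the job-precedence requirement, the chain edge $\theta_{i,j}\to\theta_{i,j+1}$ forces $s_{i,j+1}\ge f_{i,j}$, so the operations of task $\tau_i$ neither overlap nor violate their prescribed order. For the shop (machine) exclusivity requirement, I would argue that the operations assigned to any single shop form a totally ordered chain in $G$: on a critical-section shop $z\in\{1,\dots,Z\}$ this is exactly the directed chain on ${\bf \Theta}^z$ installed in the first construction step, while on a non-critical-section shop $Z+i$ the relevant operations are a subset of task $\tau_i$'s chain and hence are linearly ordered by the chain edges. Along any such chain the $f$-values strictly increase by the successive $C$-weights, so no two operations on the same shop ever overlap.

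Finally, the makespan of this schedule is $\max_{i,j} f_{i,j}$. Since each $f_{i,j}$ is the length of a path ending at $\theta_{i,j}$, no finishing time can exceed $len(G)$, and the endpoint of a critical path attains $f_{i,j}=len(G)$; hence the makespan equals $len(G)$, as claimed. The main care point will be the shop-exclusivity step, i.e., pinning down that \emph{both} kinds of shops receive a chain that is totally ordered in $G$ — this is exactly where the correspondence between the edges of $G$ and the job shop constraints is used. Once that correspondence is in hand, the ASAP timing and the longest-path identification of the makespan are routine.
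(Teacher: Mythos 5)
Your proposal is correct and follows essentially the same route as the paper: both place each operation $\theta_{i,j}$ in the interval ending at the length of the longest path in $G$ terminating at that vertex (your $f_{i,j}$ is the paper's $L_{i,j}$), and both justify shop exclusivity by the fact that the operations on each shop form a totally ordered chain in $G$. The only cosmetic difference is that you verify non-overlap directly along the chains, whereas the paper phrases the same observation as a proof by contradiction.
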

\begin{proof}
	This lemma is proved by constructing a job shop schedule $\rho^{JS}$
	for $I^{JS}$, in which the makespan of $\rho^{JS}$ is $len(G)$.
	Suppose that the longest path 
	ended at a vertex
	$\theta_{i,j}$ in $V$ in the directed acyclic graph $G$ is
	$L_{i,j}$. There are two cases to schedule $\theta_{i,j}$ in $\rho^{JS}$:
	\begin{compactitem}
		\item If $\theta_{i,j}$ is a non-critical section, the schedule
		$\rho^{JS}$ schedules the operation 
		on shop
		$i+Z$ from time $L_{i,j}-C_{i,j}$ to $L_{i,j}$.
		\item If $\theta_{i,j}$ is a critical section guarded by mutex lock
		$z$, the schedule $\rho^{JS}$ schedules the operation 
		on shop $z$ from time $L_{i,j}-C_{i,j}$ to
		$L_{i,j}$.
	\end{compactitem}
	The above schedule has a makespan of $len(G)$ by construction. The
	only thing that has to be proved is that the schedule is a feasible
	job shop schedule for $I^{JS}$. 
	
	Suppose for contradiction that the schedule $\rho^{JS}$ is not a
	feasible job shop schedule for $I^{JS}$. This is only possible if
	the schedule $\rho^{JS}$ has a conflicting decision to schedule two
	operations at the same time $t$ on a shop $z$. There are two cases:
	\begin{compactenum}
		\item $z$ is an exclusively reserved shop for the
		non-critical sections of a task. This contradicts to the
		definition of $G$ since the non-critical sections of task $\tau_i$
		form a total order in  graph $G$.
		\item $z$ is a shop for the critical sections guarded by the
		mutex lock $z$. This contradicts to the definition of $G$ since
		the critical sections in ${\bf \Theta}^z$ 
		form a total order in graph $G$.
	\end{compactenum}
	In both cases, we reach the contradiction. Therefore, $I^{JS}$ is a
	feasible job shop schedule with a makespan of $len(G)$.
\end{proof}

\begin{lemma}
	\label{lemma:job-shop-schedule-to-graph}
	Suppose that there is a job shop schedule for $I^{JS}$ whose
	makespan is $\Delta$. Then, there is a directed acyclic graph $G$
	for $I^{MS}$ whose critical path length is at most $\Delta$.
\end{lemma}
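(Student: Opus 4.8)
The plan is to run the construction of Lemma~\ref{lemma:graph-to-job-shop-schedule} in reverse: I would start from a feasible (non-preemptive) job shop schedule $\rho^{JS}$ for $I^{JS}$ with makespan $\Delta$ and read the dependency graph $G$ directly off its per-shop execution orders. Concretely, for each mutex-lock shop $z \in \{1,\ldots,Z\}$, since $\rho^{JS}$ is non-preemptive and a shop processes at most one operation at a time, the operations corresponding to ${\bf \Theta}^z$ are executed in a well-defined sequential order; I would orient the directed chain on ${\bf \Theta}^z$ in $G$ according to exactly this order. The intra-task edges $\theta_{i,j}\to\theta_{i,j+1}$ are already fixed by the definition of $G$, so this fully determines $G$, and the weight of any path is the sum of the $C_{i,j}$ of the vertices on it.

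Then I would first verify that $G$ is acyclic. For this I assign to each vertex $\theta_{i,j}$ its completion time $f_{i,j}$ in $\rho^{JS}$ and argue that $f$ is (weakly) increasing along every edge: for a task-chain edge this is the job shop chain constraint (operation $O_{i,j}$ must finish before $O_{i,j+1}$ may start), and for a mutex-chain edge it is the sequential processing on shop $z$. Breaking ties among equal completion times (which can occur only for zero-length segments) by the shop's processing order yields a strictly increasing labeling, hence a topological order, so $G$ is indeed a directed acyclic graph.

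The core step is the bound $len(G)\le\Delta$, which I would derive from the per-vertex inequality $L_{i,j}\le f_{i,j}$, where $L_{i,j}$ denotes the longest path length ending at $\theta_{i,j}$ as in Lemma~\ref{lemma:graph-to-job-shop-schedule}. I prove this by induction along the topological order just established. For every in-neighbor $\theta_{p,q}$ of $\theta_{i,j}$ in $G$ — whether it comes from the task chain or from the mutex chain — the same two constraints give $f_{p,q}\le s_{i,j}=f_{i,j}-C_{i,j}$, i.e.\ a predecessor must finish no later than $\theta_{i,j}$ starts in $\rho^{JS}$. Combined with the inductive hypothesis $L_{p,q}\le f_{p,q}$, this yields $L_{i,j}=C_{i,j}+\max L_{p,q}\le C_{i,j}+(f_{i,j}-C_{i,j})=f_{i,j}$, where the maximum ranges over the in-neighbors; the base case (no predecessor) is $L_{i,j}=C_{i,j}\le f_{i,j}$ since $s_{i,j}\ge 0$. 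Taking the maximum over all vertices gives $len(G)=\max_{i,j}L_{i,j}\le\max_{i,j}f_{i,j}=\Delta$.

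The main obstacle I anticipate is checking that the two families of precedence edges are \emph{simultaneously} consistent with a single timing of $\rho^{JS}$, so that acyclicity and the inequality $f_{p,q}\le s_{i,j}$ hold for every in-edge at once; once that observation is in place the induction is routine. The only delicate point is degenerate zero-length segments, which are handled by the tie-break noted above and never affect the weak inequalities driving the length bound.
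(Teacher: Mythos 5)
Your proposal is correct and follows essentially the same route as the paper: the graph is read off the per-shop execution orders of $\rho^{JS}$, acyclicity follows from the schedule's sequencing constraints, and the bound $len(G)\le\Delta$ reflects that every path in $G$ corresponds to a chain of operations executed one after another in $\rho^{JS}$. The only difference is presentational: the paper argues the length bound briefly by contradiction, whereas you make the same observation rigorous via the per-vertex invariant $L_{i,j}\le f_{i,j}$ proved by induction along a topological order, which also cleanly disposes of the zero-length tie-breaking issue.
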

\begin{proof}
	This lemma is proved by constructing a graph~$G$ for $I$, in which
	the critical path length of $G$ is at most~$\Delta$. By the
	definition of $G$, the sub-job $\theta_{i,j}$ is a predecesor of
	$\theta_{i,j+1}$ for $j=1,2,\ldots,\eta_i-1$ for every task
	$\tau_i$. For the sub-jobs in ${\bf \Theta}^z$, we define their total
	order and form a chain in~$G$ by following the execution order on
	shop $z$ in the given schedule~$\rho^{JS}$ for~$I^{JS}$.  Such a
	graph $G$ must be acyclic; otherwise, the schedule $\rho^{JS}$ is
	not a valid job shop schedule for $I^{JS}$.
	
	We now prove that the critical path length $len(G)$ of $G$ is no
	more than $\Delta$. Suppose for contradiction that $len(G) >
	\Delta$. This critical path of $G$ defines a total order of the
	execution of the computation segments in the critical path, which
	follows \emph{exactly} the total order of the operations of a job
	and a shop in $\rho^{JS}$. Therefore, this contradicts to the fact
	that the makespan of schedule $\rho^{JS}$ for $I^{JS}$ is $\Delta$.
\end{proof}

Based on
Lemmas~\ref{lemma:graph-to-job-shop-schedule}~and~\ref{lemma:job-shop-schedule-to-graph},
we get the following theorem:

\begin{theorem}
	\label{thm:equivalent-delivery-time+graph}
	An $a$-approximation algorithm for the job shop scheduling problem
	$J_{Z+n}||C_{\max}$ can be used to construct a dependency graph $G$
	with $len(G) \leq a \times len(G^*)$, where $G^*$ is a dependency
	graph that has the shortest critical path length for the input
	instance $I^{MS}$ of the \tsocs{} makespan problem.
\end{theorem}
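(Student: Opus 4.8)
The plan is to exploit the two preceding lemmas as the two halves of a sandwich argument, so that the approximation guarantee on makespan transfers directly to the critical path length. First I would fix notation: let $C^*_{\max}$ denote the optimal makespan of the job shop instance $I^{JS}$ for $J_{Z+n}||C_{\max}$, and let $len(G^*)$ denote the shortest critical path length achievable by any dependency graph for the input instance $I^{MS}$. The key preliminary claim is the equality $C^*_{\max} = len(G^*)$. The ``$\leq$'' direction follows by applying Lemma~\ref{lemma:graph-to-job-shop-schedule} to the optimal graph $G^*$: this yields a job shop schedule for $I^{JS}$ whose makespan is exactly $len(G^*)$, so the optimum $C^*_{\max}$ can be no larger. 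The ``$\geq$'' direction follows by applying Lemma~\ref{lemma:job-shop-schedule-to-graph} to an \emph{optimal} job shop schedule of makespan $C^*_{\max}$: this produces a dependency graph whose critical path length is at most $C^*_{\max}$, and since $len(G^*)$ is the minimum over all such graphs, we get $len(G^*) \leq C^*_{\max}$. Combining the two inequalities gives $C^*_{\max} = len(G^*)$.

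With this equality in hand, the remainder is a direct chaining of inequalities. I would run the given $a$-approximation algorithm on the instance $I^{JS}$, obtaining a feasible job shop schedule $\rho^{JS}$ with makespan $\Delta \leq a \cdot C^*_{\max}$ by the definition of an approximation ratio. Feeding $\rho^{JS}$ into Lemma~\ref{lemma:job-shop-schedule-to-graph} constructs a dependency graph $G$ for $I^{MS}$ with $len(G) \leq \Delta$. Chaining these facts together yields
\[
len(G) \;\leq\; \Delta \;\leq\; a \cdot C^*_{\max} \;=\; a \cdot len(G^*),
\]
which is precisely the claimed bound. The construction of $G$ is polynomial-time whenever the approximation algorithm is, since the reduction to $I^{JS}$ and the conversion in Lemma~\ref{lemma:job-shop-schedule-to-graph} are both polynomial.

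I do not expect any deep obstacle here: the two lemmas already carry the structural content, establishing a bijective-style correspondence between dependency graphs and job shop schedules that preserves the relevant objective values. The only point requiring genuine care is the preliminary equality $C^*_{\max} = len(G^*)$, and specifically making sure both lemmas are invoked on the correct extremal object (Lemma~\ref{lemma:graph-to-job-shop-schedule} on the optimal graph, Lemma~\ref{lemma:job-shop-schedule-to-graph} on the optimal schedule). A subtle but important detail is that Lemma~\ref{lemma:job-shop-schedule-to-graph} only guarantees critical path length \emph{at most} $\Delta$ rather than exactly $\Delta$; this one-sided inequality is in fact the favorable direction and is exactly what is needed, so no tightening is required. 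Once the equality is secured, the theorem is just the composition of the approximation ratio with the two transfer lemmas.
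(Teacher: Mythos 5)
Your proposal is correct and follows essentially the same argument as the paper's proof: establish $C^*_{\max} = len(G^*)$ by applying Lemma~\ref{lemma:graph-to-job-shop-schedule} to $G^*$ and Lemma~\ref{lemma:job-shop-schedule-to-graph} to an optimal schedule, then chain $len(G) \leq \Delta \leq a\, C^*_{\max} = a\, len(G^*)$. Your write-up is, if anything, slightly more explicit than the paper about which extremal object each lemma is applied to.
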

\begin{proof}
	Suppose that $\Delta^*$ is the optimal makespan for $I^{JS}$. By Lemma~\ref{lemma:graph-to-job-shop-schedule}, we know that 
	\mbox{$\Delta^*
		\leq len(G^*)$}. By
	Lemma~\ref{lemma:job-shop-schedule-to-graph}, we know that $\Delta^* \geq
	len(G^*)$. Therefore, \mbox{$\Delta^* = len(G^*)$.}  Suppose that the
	algorithm derives a solution for $I^{JS}$ with a makespan~$\Delta$. By the
	\mbox{$a$-approximation} for $I^{JS}$ and
	Lemma~\ref{lemma:job-shop-schedule-to-graph}, we know $\Delta \leq a \times
	\Delta^*$. Therefore, by Lemma~\ref{lemma:job-shop-schedule-to-graph} and above discussions,
	$len(G) \leq \Delta \leq a
	\Delta^* = a \times len(G^*)$.
\end{proof}

\begin{lemma}
	\label{lemma:lower-bound}
	Let $G^*$ be defined as in
	Theorem~\ref{thm:equivalent-delivery-time+graph}.
	The optimal makespan for the input instance $I^{MS}$ of the
	\tsocs{} makespan problem is at least
	\begin{equation}\label{eq:lb}
	\max\left\{\sum_{\tau_i \in \textbf{T}} \frac{C_i}{M},
	len(G^*)\right\}
	\end{equation}
\end{lemma}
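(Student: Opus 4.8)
The plan is to establish the two terms inside the maximum separately, since the optimal makespan $C_{\max}^*$ of $I^{MS}$ must dominate each of them individually. The first bound $C_{\max}^* \geq \sum_{\tau_i \in \textbf{T}} C_i / M$ is a standard work/area argument: the total execution demand of all computation segments equals $\sum_{\tau_i \in \textbf{T}} \sum_{j=1}^{\eta_i} C_{i,j} = \sum_{\tau_i \in \textbf{T}} C_i$, while on $M$ processors at most $M\cdot R$ units of work can be completed within any time window of length $R$. Since every feasible schedule must satisfy the whole demand by its makespan, $M\cdot C_{\max}^* \geq \sum_{\tau_i \in \textbf{T}} C_i$, which rearranges to the first term.

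The second bound $C_{\max}^* \geq len(G^*)$ is the interesting part, and is where I would concentrate the effort. I would take an optimal feasible schedule $\rho$ of $I^{MS}$ (with makespan $C_{\max}^*$) and \emph{extract} from it a valid dependency graph $G_\rho$, then show $len(G_\rho) \leq C_{\max}^*$. The construction mirrors the definition of $G$ in Sec.~\ref{sec:our-algorithm-based-on-shop}: keep the intra-task precedence edges $\theta_{i,j}\to\theta_{i,j+1}$, and for each mutex lock $z$ turn the critical sections in ${\bf \Theta}^z$ into a chain ordered by the order in which they execute in $\rho$. Because the mutual-exclusion constraint forces the critical sections of $z$ to run one after another, this execution order is a genuine total order, so $G_\rho$ is well defined and acyclic by the same time-consistency argument used in the proof of Lemma~\ref{lemma:job-shop-schedule-to-graph}; in particular $G_\rho$ is a valid dependency graph for $I^{MS}$.

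The core step is then to bound the critical path of $G_\rho$ by the makespan. Let $s(\theta)$ and $f(\theta)$ denote the start and finish times of segment $\theta$ in $\rho$. For any edge $\theta\to\theta'$ of $G_\rho$ — whether an intra-task edge (sequential execution inside a job) or a mutex-chain edge (critical sections of one lock run one after another) — the successor cannot begin before the predecessor completes, so $s(\theta')\geq f(\theta)$; and since $\theta'$ must accumulate $C(\theta')$ units of execution inside $[s(\theta'),f(\theta')]$ even under preemption, $f(\theta')\geq s(\theta')+C(\theta')\geq f(\theta)+C(\theta')$. Propagating this inequality along any directed path, and using $f(\theta)\geq s(\theta)+C(\theta)\geq C(\theta)$ at the source vertex, shows that the sum of the execution times along the path is at most the finish time of its last vertex, hence at most $C_{\max}^*$. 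Taking the longest path gives $len(G_\rho)\leq C_{\max}^*$. Finally, because $G^*$ has the shortest critical path length among all valid dependency graphs for $I^{MS}$ and $G_\rho$ is one such graph, $len(G^*)\leq len(G_\rho)\leq C_{\max}^*$, which establishes the second term and completes the bound.

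I expect the main obstacle to be the care needed in the graph-extraction step for a \emph{general} (possibly preemptive, global or semi-partitioned) optimal schedule: one must verify that the per-lock execution order is unambiguous so that each ${\bf \Theta}^z$ chain is well defined, that $G_\rho$ is acyclic and valid, and — crucially — that the finish-time inequality $f(\theta')\geq s(\theta')+C(\theta')$ survives preemption (it does, since the time $\theta'$ spends executing within $[s(\theta'),f(\theta')]$ is exactly $C(\theta')$ and cannot exceed the window length). Once these points are pinned down, the critical-path inequality and its combination with the trivial work bound are routine.
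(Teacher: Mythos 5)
Your proposal is correct and follows essentially the same route as the paper, which dispatches the first term with the same pigeonhole/work argument and justifies the second term by exactly the observation you elaborate: any feasible schedule induces a valid dependency graph whose critical path length is a lower bound on its makespan, and $G^*$ minimizes that length over all such graphs. The paper states both bounds in a single sentence each, so your write-up is simply a fully detailed version of its sketch, including the (correct) care about preemption and the well-definedness of the per-lock execution order.
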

\begin{proof}
	The lower bound $\sum_{\tau_i \in \textbf{T}} \frac{C_i}{M}$ is due
	to the pigeon hole principle. The lower bound $len(G^*)$ is due to
	the definition with an infinite number of processors.  
\end{proof}

\begin{theorem}
	\label{theorem:final-approximation}
	Applying list scheduling for the dependency graph $G$ with $len(G)
	\leq a \times len(G^*)$ results in a schedule with an approximation
	ratio of $a+1$ for the \tsocs{} makespan problem under
	semi-partitioned scheduling, where $G^*$ is defined in
	Theorem~\ref{thm:equivalent-delivery-time+graph}.
\end{theorem}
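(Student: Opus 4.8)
The plan is to combine a classical Graham-style list-scheduling bound with the two lower bounds on the optimal makespan from Lemma~\ref{lemma:lower-bound}. First I would observe that list scheduling applied to $G$ produces a \emph{feasible} schedule for $I^{MS}$: since $G$ already encodes the intra-task ordering $\theta_{i,j}\prec\theta_{i,j+1}$ and, for each mutex lock $z$, a total order over $\mathbf{\Theta}^z$, any schedule that respects the precedence constraints of $G$ automatically executes the segments of each task sequentially (so no job runs in parallel) and serializes the critical sections guarded by each lock. Hence no separate feasibility argument is needed beyond respecting $G$, and the only quantity to bound is the makespan $L$ that list scheduling attains on $G$.

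Next I would prove the list-scheduling makespan bound $L \leq \frac{\sum_{\tau_i\in\mathbf{T}} C_i}{M} + \left(1-\frac{1}{M}\right) len(G)$. The key is the standard Graham path argument: because list scheduling leaves a processor idle only when no eligible sub-job exists, one can trace a directed path $\pi$ in $G$ (starting from a segment that finishes last and walking backward, each time to a predecessor that was still executing whenever its successor was blocked) such that at every instant at which some processor is idle, a segment of $\pi$ is being processed. Since the segments of $\pi$ are precedence-ordered in $G$ they occupy disjoint time intervals, so the measure $I$ of the set of instants with at least one idle processor satisfies $I \leq len(G)$. Counting busy processor-time, with $b(t)$ the number of busy processors at time $t$, gives $\sum_i C_i = \int_0^L b(t)\,dt \geq M(L-I)+I = ML-(M-1)I$, hence $L \leq \frac{\sum_i C_i}{M} + \frac{M-1}{M} I \leq \frac{\sum_i C_i}{M} + \left(1-\frac{1}{M}\right) len(G)$.

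Finally I would assemble the ratio. By Lemma~\ref{lemma:lower-bound} the optimal makespan $C_{\max}^*$ of $I^{MS}$ satisfies both $C_{\max}^* \geq \frac{\sum_i C_i}{M}$ and $C_{\max}^* \geq len(G^*)$; and by hypothesis (via Theorem~\ref{thm:equivalent-delivery-time+graph}) $len(G) \leq a \times len(G^*) \leq a\,C_{\max}^*$. Using the weaker form $\left(1-\frac{1}{M}\right)\leq 1$ of the bound above,
\[
L \leq \frac{\sum_i C_i}{M} + len(G) \leq C_{\max}^* + a\,C_{\max}^* = (a+1)\,C_{\max}^*,
\]
which is exactly the claimed approximation ratio.

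I expect the main obstacle to be the rigorous justification of the covering path $\pi$ in the second step, i.e.\ constructing it and arguing that \emph{every} idle instant is covered. This relies on the defining property of list scheduling (no processor idles while an eligible sub-job is available) together with the fact that a blocked segment always has a $G$-predecessor still executing. Care is needed because both the intra-task chain and the per-lock chains contribute precedence edges, so the backward walk may alternate between these two edge types; one must verify that in either case the walk strictly advances along edges of $G$ and therefore terminates in a genuine directed path of length at most $len(G)$. Once this covering property is established, the remaining arithmetic combining the bound with Lemma~\ref{lemma:lower-bound} is routine.
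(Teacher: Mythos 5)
Your proposal is correct and takes essentially the same route as the paper: the paper simply invokes Graham's result to get the bound $len(G)+\sum_{\tau_i\in\textbf{T}}C_i/M$ and combines it with Lemma~\ref{lemma:lower-bound} and the hypothesis $len(G)\leq a\times len(G^*)$, exactly as you do in your final step. The only difference is that you re-derive Graham's (slightly sharper) bound with the covering-path argument instead of citing it, which is a matter of exposition rather than a different proof.
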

\begin{proof}
	According to Theorem 1 and Section 4 in
	\cite{DBLP:journals/siamam/Graham69}, by applying list scheduling,
	the makespan of $I^{MS}$ for the \tsocs{} makespan problem is at
	most
	{\small\begin{align*}
		&len(G) + \sum_{\tau_i \in \textbf{T}} \frac{C_i}{M} 
		\leq a \times len(G^*) + \sum_{\tau_i \in \textbf{T}} \frac{C_i}{M} \\
		&\leq (a+1)
		\times \max\left\{\sum_{\tau_i \in \textbf{T}} \frac{C_i}{M},
		len(G^*)\right\}  
		\end{align*}}
	The resulting schedule is a semi-partitioned schedule since two
	computation segments of a task can be executed on different
	processors.
	By Lemma~\ref{lemma:lower-bound}, we conclude the
	theorem.
\end{proof}

Since the
\mbox{1950s~\cite{lawler-survey-1993, Chen1998}}, job/flow shop scheduling problems have been extensively studied. 
Although the problems are ${\mathcal NP}$-complete in the strong
sense (even for very restrictive cases), algorithms with different
properties have been reported in the literature. If time complexity is
not a major concern, applying constraint programming as well as mixed
integer linear programming (MILP) or branch-and-bound heuristics can
derive optimal solutions for the job shop scheduling problem. In such
a case, based on Theorem~\ref{theorem:final-approximation}, our
DGA has an approximation ratio of $2$ for the \tsocs{}
makespan problem.

\subsection{Remarks}
\label{subsec:remarks}

At first glance, it may seem  impractical to reduce the \tsocs{}
makespan problem to another very challenging problem, i.e., job shop scheduling,
in the first step of our DGA algorithms. However, 
an advantage of considering the job shop scheduling problem is that it has been
extensively studied in the literature, related results can directly be applied,
and commercial tools, like the Google {OR-Tools}~\footnote{ https://developers.google.com/optimization/},
can be utilized, as we did in our evaluation. In addition,  due to Lemma~\ref{lemma:graph-to-job-shop-schedule}, constructing a good
dependency graph implies a good schedule for $I^{JS}$.


The last $n$ job shops, i.e., shops \mbox{$Z+1, Z+2, \ldots, Z+n$}, in
$I^{JS}$, are in fact created just to match the original job shop
scheduling problem. 
From the literature of flow and job shop scheduling,
we know that these additional $n$ job shops can be removed by
introducing \emph{delay} ($l_{i,j}$ in Sec.~\ref{sec:scheduling-theory}). If the first computation segment
$\theta_{i,1}$ of task~$\tau_i$ is a non-critical section, this
implies a non-zero release time $r_i$ of task $\tau_i$ in $I^{JS}$. 

In our Google \mbox{OR-Tools}
implementation for solving $I^{JS}$, 
the no overlap constraint has to be taken into consideration
for both machine and job perspectives.
For each machine, it prevents
jobs assigned on the same machine from overlapping in time.
For each job, it prevents sub-jobs for the same job from
overlapping in time.
The first constraint 
can be achieved by
applying the \texttt{AddNoOverlap} method, by default supported in
Google OR-Tools, for each machine.
For the second constraint,
instead of
creating $n+Z$ job shops, we utilize the above concept by creating
only $Z$ job shops and adding proper delays between the
operations.
We 
configure the
start time (denoted as $\theta_{i,j}.start$) of a computation segment
based on the end time (denoted as $\theta_{i,j}.end$) of an earlier
computation segment. For notational brevity, we assign
$\theta_{i,1}.start \geq 0$ and $\theta_{i,0}.end = 0$. For any $j
\geq 2$ with $\lambda_{i,j}=1$:
\begin{equation}
\label{eq:ortools-constraints}
\begin{cases}
\theta_{i,j}.start \geq \theta_{i,j-1}.end & \mbox{ if } \lambda_{i,j-1}
\mbox{ is } 1\\
\theta_{i,j}.start \geq \theta_{i,j-2}.end  + C_{i,j-1} & \mbox{ if } \lambda_{i,j-1}
\mbox{ is } 0\\
\end{cases}
\end{equation}
In other words, if $\theta_{i,j-1}$ is a non-critical section, the
execution time $C_{i,j-1}$ is added directly to the end (finishing)
time of $\theta_{i,j-2}$; otherwise $\theta_{i,j}$
is started after the end time of $\theta_{i,j-1}$.


Hence, a proper job shop scheduling problem for $I^{JS}$ is 
$J_Z|r_j,l_j|C_{\max}$, i.e., scheduling of jobs with
release time and delays between operations on $Z$ shops.  An
$a$-approximation algorithm for the problem $J_Z|r_j,l_j|C_{\max}$ can
be used to construct a dependency graph. This problem is not
widely studied and only few results can be found in the literature.

For a task system with a \emph{flow-shop compatible access pattern}, 
i.e., the $Z$ mutex locks have a pre-defined total order,
the instance 
$I^{JS}$ is in fact a flow
shop problem. For a special case with three computation segments per
task in which the second segment is a non-critical section, and the
first and the third segments are critical sections of mutex locks $1$
and $2$, respectively, the constructed input 
$I^{JS}$ is 
a two-stage flow shop problem with delays,
i.e., $F_2|l_j|C_{\max}$. For 
the problem $F_2|l_j|C_{\max}$, several polynomial-time approximation algorithms
are known:
Karuno and Nagamochi~\cite{DBLP:conf/isaac/KarunoN03} developed a
\mbox{$\frac{11}{6}$-approximation}, Ageev~\cite{DBLP:conf/waoa/Ageev07}
developed a $1.5$ approximation for a special case when $C_{i,1}=C_{i,3}$ for
every task $\tau_i$, and
Zhang~and~van~de~Velde~\cite{DBLP:journals/scheduling/ZhangV10}
proposed polynomial-time approximation schemes (PTASes), i.e.,
$(1+\epsilon)$-approximation for any $\epsilon > 0$.

Specifically,
Zhang~and~van~de~Velde~\cite{DBLP:journals/scheduling/ZhangV10}
presented PTASes for different settings of the job/flow shop
scheduling problems 
in~\cite{DBLP:journals/scheduling/ZhangV10}. For any of such
scenarios, the approximation ratio of DGA is at most $2+\epsilon$
for any $\epsilon >0$, according to
Theorem~\ref{theorem:final-approximation}.

\subsection{Extension to Periodic Tasks}
\label{sec:extention-to-periodic}

The treatment used in \cite{Shi-Dependency-RTAS2019} 
to construct dependency graphs
can also be applied here. That is, unroll the jobs of all the tasks in one hyper-period and then construct a dependency graph of
these jobs. 
Since 
the jobs for one task 
should not have any execution overlap with each other, we only need
one dedicated shop for them. Therefore, there are two modifications of
the job shop problem scheduling considered in
Sec.~\ref{sec:our-algorithm-based-on-shop} (the studied
problem is $J_{Z+n}|r_j, l_{i,j}|L_{\max}$):
\begin{compactitem}
	\item For the $\ell$-th job, we set its release time to $(\ell-1)T_i$ 
	and its absolute deadline to $(\ell-1)T_i+D_i$.
	\item Instead of optimizing the makespan, the objective is to minimize the maximum lateness.
\end{compactitem}

In the end, the schedules are generated offline by applying
LIST-EDF or P-EDF, similar to fame-based task systems.



\section{Implementation and Overheads}
\label{sec:implementation-overhead}

In this section, we present details on how we implemented the
dependency graph approach in \litmus{} to support multiple critical sections
per task.  
Afterwards, the implementation overheads are compared with the
Flexible Multiprocessor Locking Protocol (FMLP)~\cite{block-2007}
provided by \litmus{}
for both partitioned and global scheduling.

\subsection{Implementation Details}

When implementing our approach in \litmus{}, we can either apply the table-driven scheduling that
\litmus{} provides, or implement a new binary
semaphore which enforces the execution order of critical sections that access
the same resource, since this order 
is defined in advance by the dependency graph.
A static scheduling table can be generated over one hyper-period
and be repeated periodically in a table-driven schedule.
This table determines which sub-job
is executed on which processor for each time point in the hyper-period.
However, due to the large number of sub-jobs in one hyperperiod and possible
migrations among processors, the resulting table can be very large.
To avoid this problem, we decided to implement a new binary semaphore that
supports all the properties of our new approach instead. 

Since our approach 
is an extension of the DGA by
Chen~et~al.~\cite{Chen-Dependency-RTSS18}, and
Shi~et~al.~\cite{Shi-Dependency-RTAS2019}, 
our implementation is based on the source code the authors provided
online~\cite{HDGALITMUS}, i.e., 
it is implemented under the plug-in 
Partitioned EDF with synchronization support (PSN-EDF), called P-DGA-JS, and the
plug-in Global EDF with synchronization support (GSN-EDF), 
denoted G-DGA-JS. 

The EDF feature is guaranteed by the original
design of these two plug-ins. Therefore, we only need 
to provide the relative deadlines for all the sub-jobs of each task,
and \litmus{} will automatically update the absolute deadlines accordingly
during runtime.

In order to enforce the sub-jobs to follow the execution order
determined by the dependency graph, our implementation has to:
1) let the all the sub-jobs inside one job follow the predefined order;
2) force all the sub-jobs that access the same resource to follow the 
order determined by the graph.

The first order is ensured  in \litmus{} by default.
The task deploy tool \texttt{rtspin} provided by the user-space library \emph{liblitmus}
defines the task structure, e.g., the execution order of non-critical
sections and critical sections within one task, 
the related execution times, and the resource ID that each
critical section accesses.
Moreover, the resource ID for each critical section is parsed 
by \texttt{rtspin}, so the critical section can find the correct
semaphore to lock, and in our implementation we do not
have to further consider addressing the corresponding resources.
Afterwards, \texttt{rtspin} emulates the work load in a CPU according to 
the taskset. A sub-job can be released only when its predecessor (if any)
has finished its execution.
Please note that for sub-jobs related to critical sections 
the release time is not only defined by its predecessor's finish time inside 
the same job, but also related to another predecessor that accesses
the same resource (if one exists).

A ticket system with a similar general concept to \cite{HDGALITMUS} 
is applied to enforce the execution order.
However, due to different task structure which allows to support multiple
critical sections, compared to \cite{HDGALITMUS}, additional parameters had to
be introduced and the structure of existed parameters had to be revised.
To be precise, we extended \litmus{} data structure \texttt{rt\_params} that
describes tasks, e.g., priority, period, and execution time, by adding:
\begin{compactitem}
	\item \texttt{total\_jobs}: an integer which defines the number of jobs of the 
	related task in one
	hyper-period.
	\item \texttt{total\_cs}: an integer that defines the number of critical
	sections in this task.
	\item \texttt{job\_order}: an array which 
	defines the total order of the  
	sub-jobs related to critical sections 
	that access the same resource over one
	hyper-period. In addition, the last $Z$ elements 
	record the total number of critical sections of the taskset for each shared resource. 
	Thus, the length of the array is the number of critical sections in one
	hyper-period plus the number of total shared resources,
	i.e., \texttt{len}(\texttt{job\_order}) = \texttt{total\_jobs} $\times$ \texttt{total\_cs} + $Z$.
	\item \texttt{current\_cs}: an integer that defines the index of the current
	critical section of the task that is being executed.
	\item \texttt{relative\_ddls}: an array which 
	records the relative deadlines for all sub-jobs of one task.
\end{compactitem}
Furthermore, we implemented a new binary semaphore, named as \texttt{mdga\_semaphore}, to 
make sure the execution order of all the sub-jobs that access the same resource
follows the order specified by the dependency graph.

A semaphore has the following common components:
\begin{compactitem}
	\item \texttt{litmus\_lock} protects the semaphore structure,
	\item\texttt{semaphore\_owner} defines the current holder of the semaphore,
	and 
	\item \texttt{wait\_queue} stores all jobs waiting for this semaphore.  
\end{compactitem}
A new parameter named \texttt{serving\_ticket} is 
added to control the 
non-work conserving access pattern of the critical sections, 
i.e., a job can only  lock the semaphore and 
start its critical section if it
holds the ticket equals to the
corresponding \texttt{serving\_ticket}. 

The pseudo code in Algo.~\ref{alg:multi-cs-dga-semaphore} shows three main
functions in our implementation:  
The function \textbf{\texttt{get\_cs\_order}} 
returns the position of the sub-job in the execution order 
for all the sub-jobs that access the same shared resource 
during the run-time. 
In \litmus{}, \texttt{job\_no}  
counts the number of jobs that
one task releases. 
In order to find out the exact position of this job in one hyper-period,
we apply a modulo operation on \texttt{job\_no} and \texttt{total\_jobs}.
Since a job has multiple critical section  
and the \texttt{current\_cs} represents the position of the critical section in
a job, the index is calculated by counting the number of previous jobs' critical
sections and the \texttt{current\_cs} in this job.
After that, the value
of~\texttt{cs\_order} is searched from \texttt{job\_order} based on the
obtained index.

We provide an example with 5 tasks which share two resources.  
The four tasks, i.e., 
$\tau_{1}$, $\tau_{2}$, $\tau_{3}$, and $\tau_{4}$ 
are 
identical to Fig.~\ref{fig:jobshop-example}
and 
task $\tau_{5}$ has a period $T_5=50$ and 
the same pattern as $\tau_{4}$, i.e., it
requests resource 2 in its second segment and request resource 1 in its
forth segment. 
Hence, the hyper-period for this taskset is $50$, 
$\tau_{1}$, $\tau_{2}$, $\tau_{3}$, and $\tau_{4}$  release two jobs in one
hyper-period, and $\tau_{5}$  releases one job in one hyper-period.
The related data structure is shown
in Table~\ref{tab:task-data-structure}.
Task $\tau_1$ has the \texttt{job\_order} = [1, 3, 6, 8, 9, 9].
The first two elements, i.e., [1, 3], represents that the two critical
sections of $J_1^1$ have the execution order 1 and 3 accordingly,
the following two elements, i.e., [6, 8] denotes the execution order
for $J_1^2$'s two critical sections in one hyper-period,
and the last two elements, i.e., [9, 9] shows the number of jobs that request
the related resources. For both resource 1 and resource 2,
there are nine jobs which request the resource in one hyper-period.
Assume that the 
\texttt{job\_no} for $\tau_1$ is 13.
Line 1 in Algo.~\ref{alg:multi-cs-dga-semaphore} returns the 
\texttt{current\_jobno} which represents the 
corresponding relative
position in one hyper-period, 
i.e., the $13^{th}$ job of $\tau_1$ is the 
second job of
$\tau_1$ in the current hyper-period.
Then line~2  finds the index of corresponding critical section, i.e., 
the second critical section of the second job of $\tau_{1}$ has the index~$3$.
In the end, the corresponding execution order can be found from
\texttt{job\_order} according to line 3 in Algo.~\ref{alg:multi-cs-dga-semaphore}.
Therefore, the $13^{th}$  job of task
$\tau_1$ now has the execution order $8$ to grant access to the corresponding resource.

\begin{table}[]
	\centering
	\scalebox{1.1}{
		\begin{tabular}{|c|c|c|c|c|}
			\hline
			& total\_jobs & total\_cs & job\_order        & current\_cs \\ \hline
			$\tau_1$ & 2           & 2         & {[}1,3,6,8,9,9{]} & 1           \\ \hline
			$\tau_2$ & 2           & 2         & {[}0,2,5,7,9,9{]} & 1           \\ \hline
			$\tau_3$ & 2           & 2         & {[}1,3,6,8,9,9{]} & 0           \\ \hline
			$\tau_4$ & 2           & 2         & {[}0,2,5,7,9,9{]} & 0           \\ \hline
			$\tau_5$ & 1           & 1         & {[}4,4,9,9{]}     & 0           \\ \hline  
	\end{tabular}}
	\vskip 0.1cm
	\caption{An example of the data structure for tasks.}
	\label{tab:task-data-structure}
\end{table}

\begin{algorithm}[t]
	\caption{DGA with multi-critical sections implementation}
	\label{alg:multi-cs-dga-semaphore}
	
	\begin{algorithmic}[1]
		\small
		\INPUT New coming task $\tau_{i}$\{\texttt{job\_no}, \texttt{total\_jobs},
		\texttt{total\_cs}, \texttt{current\_cs}, \texttt{relative\_ddls}\}, and
		Requested semaphore $s_z$\{\texttt{semaphore\_owner}, 
		\texttt{serving\_ticket}, \texttt{wait\_queue}\};
		\\~\\ \textbf{Function} {get\_cs\_order}():
		\STATE \texttt{current\_jobno} $\leftarrow$ $\tau_{i}$.\texttt{job\_no}\ mod\ $\tau_{i}$.\texttt{total\_jobs};
		\STATE index $\leftarrow$ \texttt{current\_jobno}\ $\times$ \ $\tau_{i}$.\texttt{total\_cs}
		+ \texttt{current\_cs};
		\STATE \texttt{cs\_order} $\leftarrow$ $\tau_{i}$.\texttt{job\_order}[index];
		\\~\\\textbf{Function} {mdga\_lock}():
		\IF {$s_z$.\texttt{semaphore\_owner} is NULL and \\
			$s_z$.\texttt{serving\_ticket} equals to $\tau_{i}$.\texttt{cs\_order}}
		\STATE $s_z$.\texttt{semaphore\_owner} $\leftarrow$ $\tau_{i}$;
		\STATE Update the deadline for $\tau_i$;
		\STATE $\tau_{i}$ starts the execution of its critical section;
		\ELSE 
		\STATE Add $\tau_{i}$ to $s_z$.\texttt{wait\_queue};
		\ENDIF
		~\\\textbf{Function} {mdga\_unlock}():
		\STATE $\tau_{i}$ releases the semaphore lock;
		\STATE Update the deadline for $\tau_i$;
		\STATE $\tau_i$.\texttt{current\_cs}++;
		\IF {$\tau_i$.\texttt{current\_cs} $=$ \texttt{total\_cs}}
		\STATE Set $\tau_i$.\texttt{current\_cs} $\leftarrow$ 0;
		\ENDIF
		\STATE $s_z$.\texttt{serving\_ticket}++;		
		\IF {$s_z$.\texttt{serving\_ticket} $=$ \texttt{num\_cs}}
		\STATE Set $s_z$.\texttt{serving\_ticket} $\leftarrow$ 0;
		\ENDIF		
		\STATE Next task $\tau_{next}$ $\leftarrow$ the head of the \texttt{wait\_queue} (if exists);
		\IF {\texttt{serving\_ticket} equals to $\tau_{next}$.\texttt{cs\_order}}
		\STATE $s_z$.\texttt{semaphore\_owner} $\leftarrow$ $\tau_{next}$; 
		\STATE $\tau_{next}$ starts the execution of its critical section;
		\ELSE 
		\STATE $s_z$.\texttt{semaphore\_owner} $\leftarrow$ NULL;
		\STATE Add $\tau_{next}$ to $s_z$.\texttt{wait\_queue};
		\ENDIF	
		
	\end{algorithmic}
\end{algorithm}

The function \textbf{\texttt{mdga\_lock}} 
is called in order to 
lock the semaphore and get access to the corresponding resource. 
After getting the correct position in the execution order in one hyper-period
by applying function \texttt{get\_cs\_order()},
the semaphore's ownership will be checked.
If the semaphore is occupied by another job at that moment, 
the new arriving job will be added to the \texttt{wait\_queue} directly;
otherwise,
the semaphore's \texttt{current\_serving\_ticket} and
the job's \texttt{cs\_order} are compared.
If they are equal, 
the semaphore's owner will be set to that job, and
the job will start its critical section;
otherwise, the job will be added to the \texttt{wait\_queue} as well.
In our setting the \texttt{wait\_queue} is sorted by the jobs'
\texttt{cs\_order}, i.e., the job with the smallest \texttt{cs\_order} is 
the head of the waiting queue.
Hence, only the head of the \texttt{wait\_queue} has to be checked
when the current semaphore owner finishes its execution,
rather than checking the whole unsorted \texttt{wait\_queue}. 

The function \textbf{\texttt{mdga\_unlock}} 
is called once a job has finished its critical section and tries to unlock
the semaphore.
The task's \texttt{current\_cs} is added by one to point to the next possible critical
section in this job. If \texttt{current\_cs} reaches to the 
\texttt{total\_cs}, which means all the critical sections in this 
job have finished their execution, then the \texttt{current\_cs} will be reset to zero.
Next, the semaphore's \texttt{serving\_ticket} is increased by
1, i.e., it is ready to be obtained by the 
successor in the dependency graph. 
If \texttt{serving\_ticket} reaches 
the total number of critical sections related to 
this resource in one hyper-period, i.e., \texttt{num\_cs}, 
the dependency graph is traversed completely, i.e., all
sub-jobs that access the related resource finished their
executions of the critical sections in the current hyper-period, 
the parameter \texttt{serving\_ticket} is reset to
$0$ to start the next iteration.
Please note, the \texttt{num\_cs} can be found in the last $Z$ elements of
\texttt{job\_order} according to the related resource id.
After that, the first job (if any) in the \texttt{wait\_queue}, named as
$\tau_{next}$ is checked.
If $\tau_{next}$ has the \texttt{cs\_order} which equals to the
semaphore's \texttt{serving\_ticket},
the the semaphore's owner is set as $\tau_{next}$,
and $\tau_{next}$ can start the execution of its critical section.
Otherwise, the semaphore owner is set as {NULL}, and the task $\tau_{next}$ is
put back to the corresponding \texttt{wait\_queue}.

Additionally, each sub-job has its own modified deadline accordingly,
which means each job can have different deadlines when it is executing different
segments. Therefore, we have to take 
care of the deadline update during the implementation.
When we deploy a task using \texttt{rtspin} to the system, 
we deliver the relative deadline of its first sub-task as the relative deadline
of the whole task.
Since no two continuous non-critical sections are allowed in the task model,
once a sub-job finishes its execution, either \texttt{mdga\_lock}
or \texttt{mdga\_unlock} is called.
If \texttt{mdga\_lock} is called, the new critical section's deadline is updated by searching the
\texttt{relative\_deadline}; if \texttt{mdga\_lock} is called, only the finished critical section can 
update related job's deadline for its successor (if any), since $\tau_{next}$'s deadline has been updated
when it tries to lock the semaphore already.

The implementations for the global and partitioned plug-ins are similar. 
However, due to the frequent preemption and/or interrupts in global
scheduling, the preemption has to be disabled during the executions
of semaphore related functions in order to protect the
functionalities of aforementioned functions.

\subsection{Overheads Evaluations}

We evaluated the overheads of our implementation in the
following platform:
a cache-coherent SMP, consisting of two 64-bit Intel Xeon Processor
E5-2650Lv4, with 35 MB cache and 64 GB main memory. The FMLP supported in
\litmus{} was also evaluated for comparisons, including P-FMLP for
partitioned scheduling and G-FMLP for global scheduling. 
These four protocols are evaluated using same task sets where
each task has multiple critical sections.

The overheads that we tracked are: 
\begin{compactitem}
	\item \textbf{CXS}: context-switch overhead.
	\item \textbf{RELEASE}: time spent to enqueue a newly released job into a ready queue.
	\item \textbf{SCHED}: time spent to make a scheduling decision, i.e., find the next job to be executed.
	\item \textbf{SCHED2}: time spent to perform post context switch and management activities.
	\item \textbf{SEND-RESCHED}: inter-processor interrupt latency, including migrations.
\end{compactitem}

The overheads are reported in Table~\ref{tab:overheads}, which shows
that the overheads of our approach and those of 
P-FMLP, G-FMLP are
comparable. 
Furthermore, the implementations provided in
\cite{Shi-Dependency-RTAS2019}, called P-LIST-EDF and G-LIST-EDF, were
evaluated to examine the overhead and reported in
Table~\ref{tab:overheads}. The direct comparison between P-LIST-EDF
and P-DGA-JS (G-LIST-EDF and \mbox{G-DGA-JS}, respectively) is not possible
because they are designed for different scenarios, depending on the
number of critical sections per task.  The reported
overheads in Table~\ref{tab:overheads} for our approach are for task
sets with multiple critical sections per task, whilst the overheads
for P-LIST-EDF and G-LIST-EDF were for task sets with one critical
section per task. Regardless, 
they are in the same order of magnitude.

\begin{table}[]
	\centering
	\vskip 0.2cm
	\def\arraystretch{1.4}  
	\scalebox{0.7}{\begin{tabular}{|c|c|c|c|c|c|c|}
			\hline
			Max. (Avg.) in $\mu s$& CXS    & RELEASE   & SCHED   & SCHED2 & SEND-RESCHED\\ \hline 
			P-FMLP      & 29.51 (0.98)  & 17.68 (0.96) & 31.85 (1.31) & 28.77 (0.18) & 66.33 (2.86) \\ \hline
			P-DGA-JS    & 30.65 (1.25)  & 18.63 (1.02)  & 31.09 (1.64) & 29.43 (0.19) & 59.09 (21.06) \\ \hline
			G-FMLP      & 30.51 (1.05)  & 48.53 (3.75) & 45.99 (1.51) & 29.62 (0.16) & 72.26 (2.50) \\ \hline
			G-DGA-JS    & 26.87 (0.94)  & 30.01 (2.19) & 30.25 (1.02) & 19.26 (0.14) & 72.53 (21.50) \\ \Xhline{5\arrayrulewidth}			
			P-LIST-EDF  & 18.76 (0.90)  & 18.98 (1.06) & 48.50 (1.33) & 29.25 (0.16)& 38.3 (1.61) \\ \hline
			G-LIST-EDF  & 30.87 (1.79) & 61.63 (12.06) & 59.05 (4.46) & 27.17 (0.25)& 72.09 (20.77) \\ \hline
	\end{tabular}}
	\vskip 0.1cm 
	\caption{\normalsize Overheads of protocols in \litmus{}.}
	\label{tab:overheads}
\end{table}

\section{Evaluations}
\label{sec:evaluations}
We evaluated the performance of the proposed approach by
applying numerical evaluations for both frame-based task sets
and periodic task sets, and measuring its overheads.

\subsection{Evaluations Setup}
\label{sec:evaluation-setup}
We conducted evaluations on $M$ = 4, 8, and 16 processors. Based on 
$M$, we generated $100$ synthetic task sets with $10M$ tasks each, 
using 
the RandomFixedSum method~\cite{emberson2010techniques}.
We set 
$\sum_{\tau_i \in \textbf{T}} U_i=M$ and 
enforced $U_i \leq 0.5$ for each task $\tau_i$, 
where $U_i =
\frac{C_i}{T_i}$ is the utilization of a task.
The number of shared resources (binary semaphores) $Z$ was 
either $4$, $8$, or $16$. 
Each task $\tau_i$ accesses the available shared resource 
randomly between $2$ and $5$ times, 
i.e., $\sum \lambda_{i,j} \in [2, 5]$.
The total length of the critical sections $\sum_{\lambda_{i,j}=1} C_{i,j}$ 
is a fraction of the total execution
time $C_i$ of task $\tau_i$, depended on
\mbox{$H \in \{[5\%-10\%], [10\%-40\%], [40\%-50\%]\}$.}
When considering shared resources in real-time systems, 
the utilization of critical sections for each task in 
classical settings is relatively low. 
However, with the
increasing computation demand in real-time systems (e.g., for machine
learning algorithms), adopted accelerators, like GPUs, 
behave like classical shared resources (i.e., they are non-preemptive and
mutually exclusive), but have a relatively high utilization. 
Hence, we chose possible settings of $H$ that cover the complete spectrum.  
The total length of critical sections and non-critical sections are split
into dedicated segments by applying UUniFast~\cite{emberson2010techniques} separately.
For task $\tau_i$, the number of critical sections $Num_{cs}$ 
equals to $\sum \lambda_{i,j}$, 
and the number of non-critical sections $Num_{ncs}=Num_{cs}+1$.
In the end, the generated non-critical sections and critical sections
are combined in pairs, and the last segment is the last non-critical section. 
We evaluated all resulting 27 combinations of $M$, $Z$, and $H$.

The dependency graph is generated by applying:
\begin{compactenum}
	\item The method 
	in Sec.~\ref{sec:our-algorithm-based-on-shop}
	with the objective to minimize the makespan, denoted as \textbf{JS}.
	We utilized the constraint programming approach provided in the Google OR-Tools to solve the job shop scheduling problem,
	\item The extension to multiple critical sections 
	sketched in~\cite{Shi-Dependency-RTAS2019}, denoted as \textbf{PRP}. To check the feasibility of the generated dependency graph, 
	one simulated schedule with respect to the dependency graph 
	is generated. 
\end{compactenum}
We name these algorithms by combining:
\begin{compactenum}
	\item \emph{JS/PRP}: 
	the two different dependency graph
	generation methods.
	\item \emph{LEDF/PEDF}: 
	to schedule the generated graph, we used the
	LIST-EDF in~\cite{Shi-Dependency-RTAS2019} (LEDF) or 
	partitioned EDF (PEDF) 
	in~\cite{Shi-Partitioned-RTCSA2019}, and a \emph{worst-fit} partitioning
	algorithm.  
	\item \emph{P/NP}: 
	preemptive or non-preemptive schedule for critical sections. 
\end{compactenum}
\noindent We also compare our approach with the following 
protocols regarding their
schedulability by applying the publicly available tool SET-MRTS in
\cite{SET-MRTS} with the same naming:
\begin{compactitem}
	\item Resource Oriented Partitioned PCP (ROP-PCP)~\cite{RTSS2016-resource}: 
	Binds the resources on dedicated processors 
	and schedules tasks using semi-partitioned PCP.
	\item GS-MSRP~\cite{wieder-2013}: THe Greedy Slacker (GS)
	partitioning heuristic for spin-based locking protocol
	MSRP~\cite{DBLP:conf/rtss/GaiLN01}, using  Audsley's Optimal Priority
	Assignment~\cite{Audsley1991aOPA} for priority assignment.
	(LP) analysis for global FP scheduling using the FMLP~\cite{block-2007}. 
	 	\item LP-GFP-PIP: LP-based global FP scheduling using the
	 	Priority Inheritance Protocol (PIP)~\cite{DBLP:conf/rtss/EaswaranA09}.
	\item LP-PFP-DPCP~\cite{bbb-2013}: DPCP~\cite{DBLP:conf/rtss/RajkumarSL88}
	with a Worst-Fit-Decreasing (WFD) task assignment strategy~\cite{bbb-2013}. The
	analysis is based on a linear-programming (LP). 
	\item LP-PFP-MPCP~\cite{bbb-2013}:  MPCP~\cite{Rajkumar_1990}
	with a Worst-Fit-Decreasing (WFD) task assignment strategy as proposed
	in~\cite{bbb-2013}. The analysis is based on a LP.
	\item LP-GFP-FMLP~\cite{block-2007}:  FMLP~\cite{block-2007} for 
	global FP scheduling  with a LP analysis.
\end{compactitem}
\emph{Note that a comparison to the original DGA
	in~\cite{Chen-Dependency-RTSS18}
	is not possible, since the approach
	in \cite{Chen-Dependency-RTSS18}  is only applicable when there is one
	critical section per task.}
We also launched the evaluation of the Priority Inheritance Protocol
(PIP)~\cite{DBLP:conf/rtss/EaswaranA09} 
based on LP, but we were not
able to collect the complete results because validating a task set took multiple
hours. However, according to \cite{DBLP:conf/rtss/YangWB15,
	Chen-Dependency-RTSS18, Shi-Dependency-RTAS2019}, the
PIP based on LP 
performs similar to LP-GFP-FMLP.

\subsection{Evaluation Results for Frame-Based Tasks}
\label{sec:experiments-frame-based}
\begin{figure}[t]
	\centering
	\includegraphics[width=\linewidth]{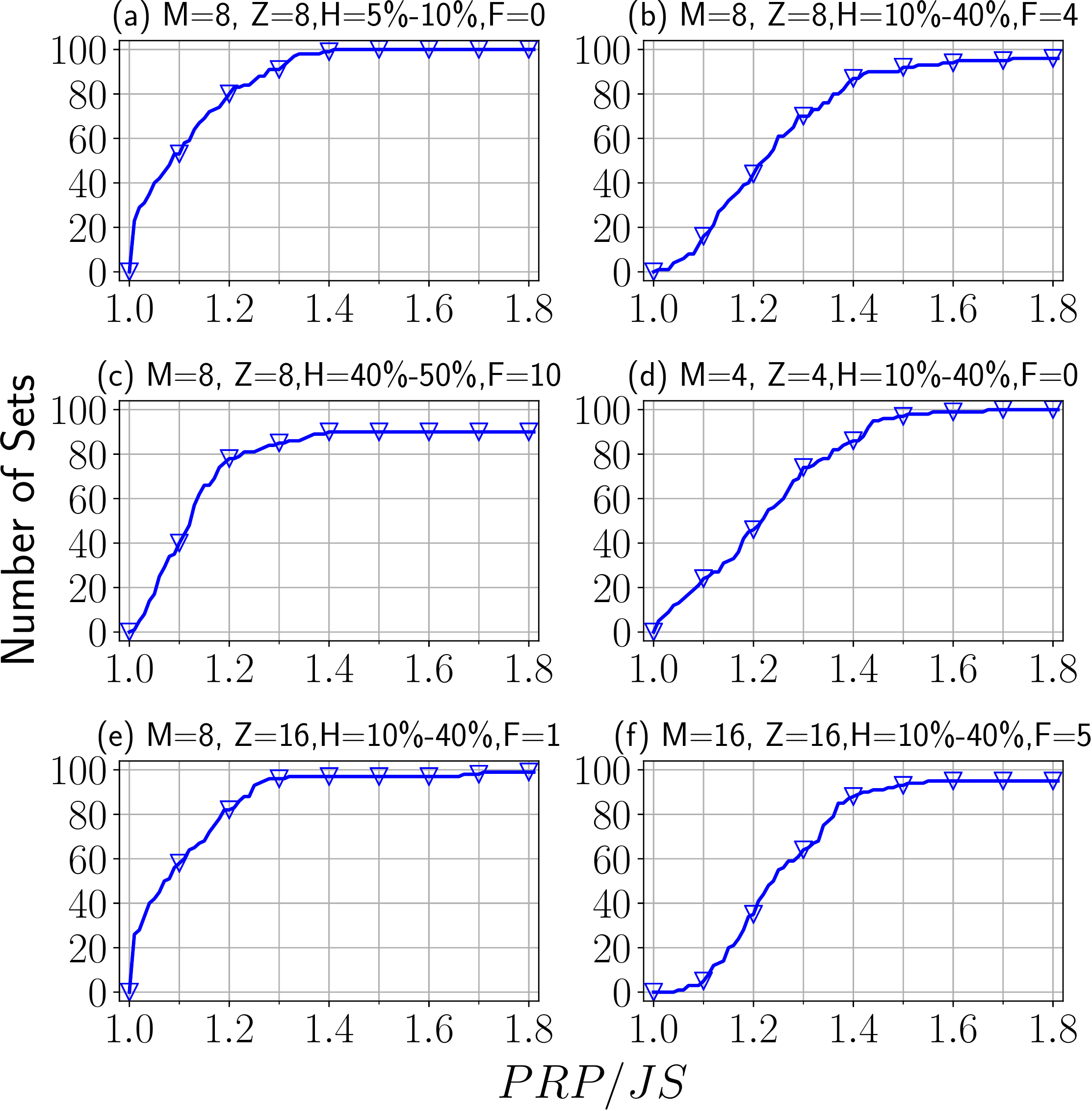}
	\caption{Comparison of critical paths from the two graph generation methods.}
	\label{fig:ratio}
\end{figure}
For frame-based task systems, we set $T=D=1$ for all the tasks, 
i.e., the
execution time of each task is the same as its utilization.
We 
tracked the number of dependency graphs
calculated with PRP where the ratio of $PRP/JS$ is less than a certain factor. 
The results are shown in
Fig.~\ref{fig:ratio}, where $F$ represents
the number of infeasible dependency graph for the  \emph{PRP} method due to
cycle detection.
The job-shop based dependency graph generation method clearly outperform the
method extended from the original DGA. 
In addition, the failure rate of the \emph{PRP} is increasing when the length
of critical sections is increased, i.e., Fig.~\ref{fig:ratio} (a), (b), and (c). The other results show similar trends and are thus omitted due to space
limitation.

In our schedulability evaluation, 
we considered synthetic task sets under the aforementioned
settings, testing the utilization level from $0$ to $100\% \times M$ 
in steps of $5\%$.
The acceptance ratios of
LP-PFP-DPCP and LP-PFP-MPCP are zero for all configurations, even for
utilization levels $\leq 20\% \times M$.
Hence, we omitted them in Fig.~\ref{fig:sched_frame}. 
Additionally, considering the readability of the figure, 
we only show 
\emph{PRP-LEDF-P}, which has 
the best performance for the approaches where dependency graphs are
generated by \emph{PRP}.

Fig.~\ref{fig:sched_frame} shows that 
our approach outperforms
the other non-DGA based methods significantly for all evaluated settings,
and performs slightly better than the methods 
using \emph{PRP}. 
Fig.~\ref{fig:ratio} and Fig.~\ref{fig:sched_frame} also show
that a better dependency graph, i.e, a shorter critical path, 
not always results in better schedulability in the second step of the DGA.



\subsection{Evaluation Results for Periodic Tasks}
\label{sec:experiments-periodic}

\begin{figure}[t]
	\centering
	\includegraphics[width=1\linewidth,trim={0 0mm 0
		0},clip]{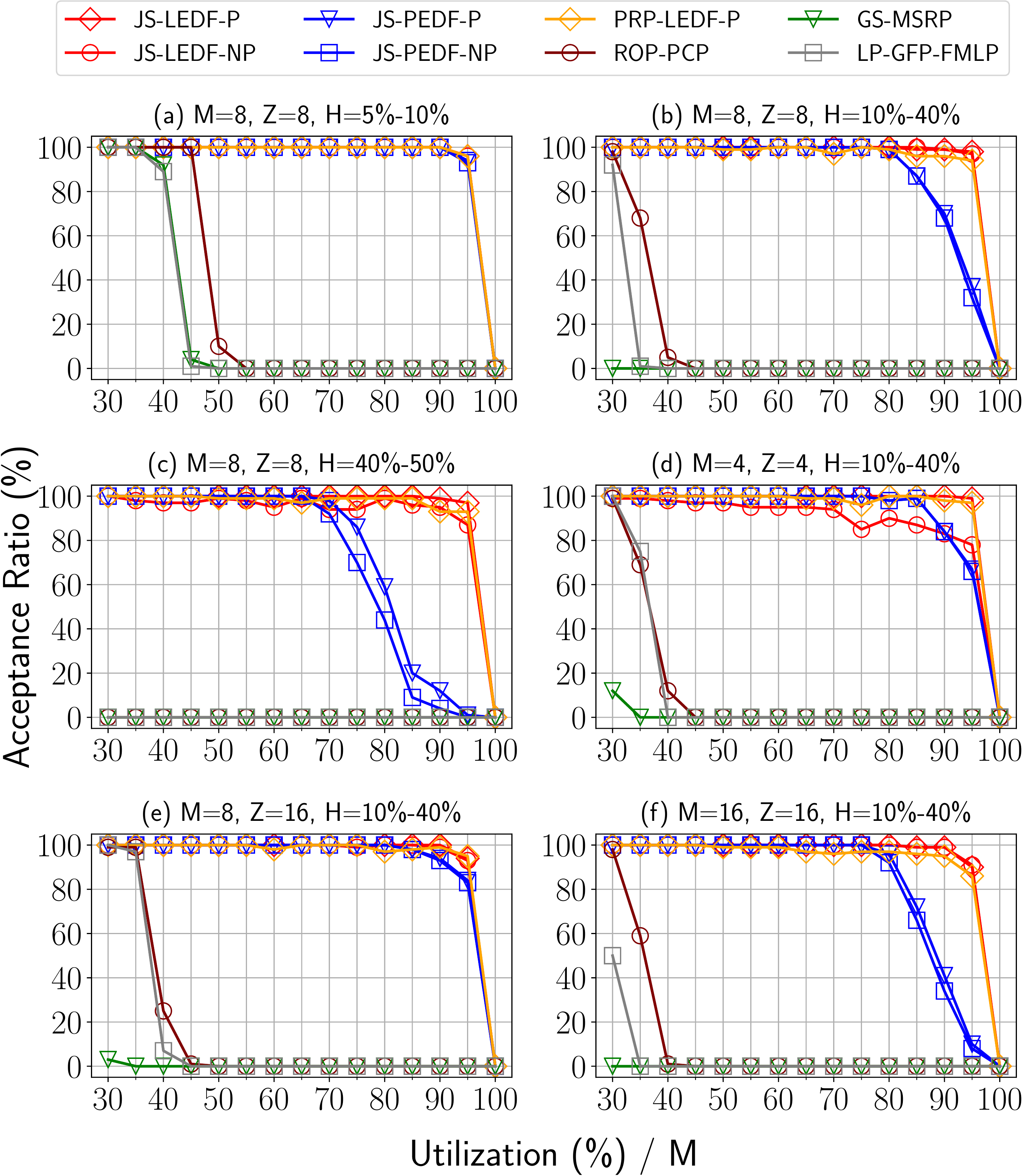}
	\caption{Schedulability of different approaches for frame-based task sets.}
	\label{fig:sched_frame}
\end{figure}

We applied constraint programming to solve the job shop
problem $J_{Z}|r_j, l_j|L_{\max}$ and construct the dependency graph. 
We extended the settings 
for frame-based task sets in Sec.~\ref{sec:experiments-frame-based}
to periodic task systems by choosing the period $T_i$ 
randomly from a set of semi-harmonic periods, i.e., $T_i\in\{1, 2, 5, 10\}$,
which is a subset of the periods used in automotive
systems~\cite{Kramer2015,DBLP:conf/ecrts/HamannD0PW17}.
We
used a small range of periods to generate reasonable task sets with
high utilization of the critical sections, which are otherwise
by default not schedulable. 

\begin{figure}[t]
	\centering
	\includegraphics[width=1\linewidth]{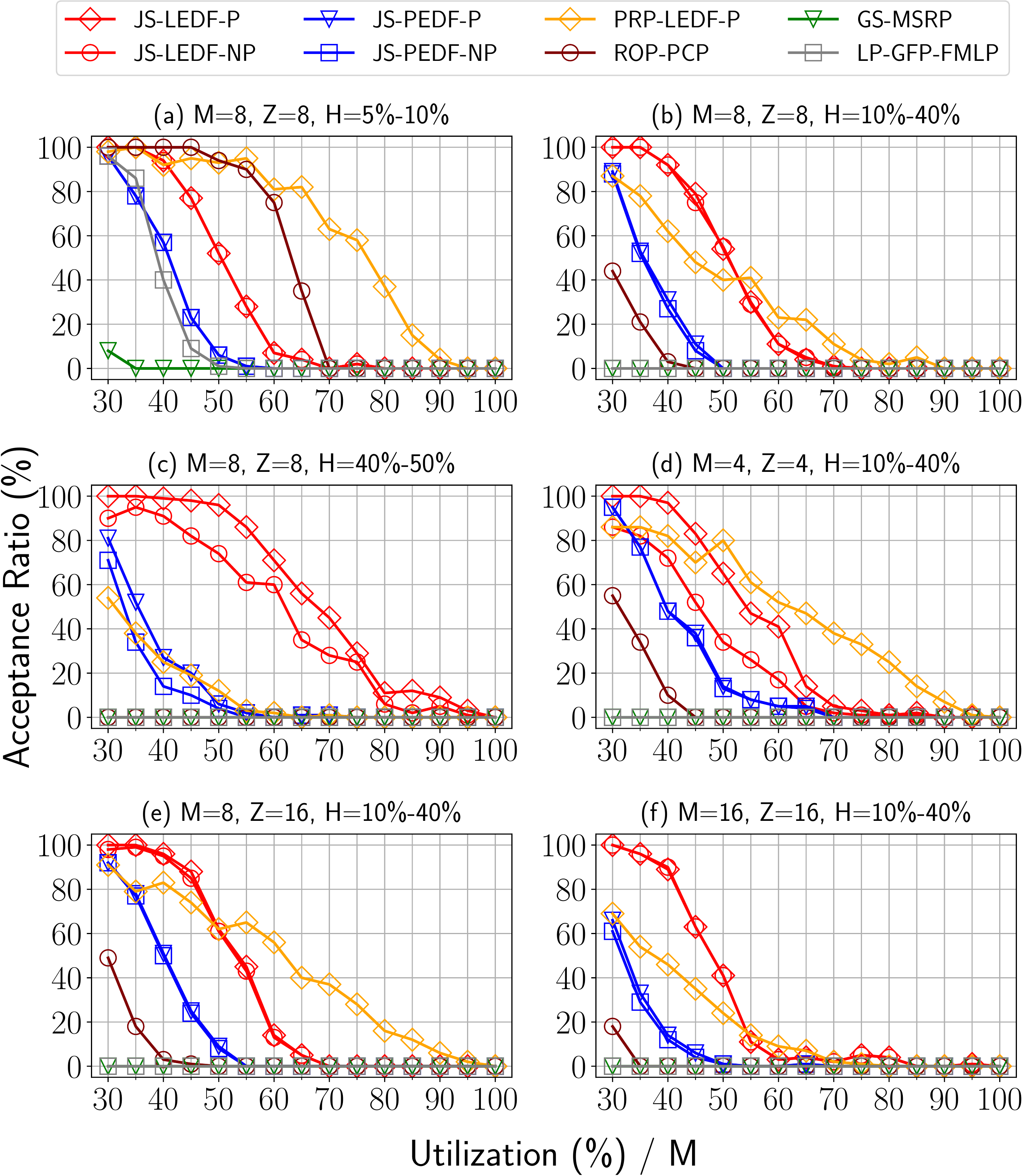}
	\caption{Schedulability of different approaches for
		periodic task sets.}
	\label{fig:sched_periodic}
\end{figure}
Due to space limitation, only a subset of the results is presented in
Fig.~\ref{fig:sched_periodic}.  When
the utilization of critical sections is high, i.e., ${H = [40\%-50\%]}$ in
Fig.~\ref{fig:sched_periodic}~(c), 
or under medium utilization when the number of processor and shared resources
are relative high, i.e., ${M = H = 16}$ in Fig.~\ref{fig:sched_periodic}~(f),
our approaches outperforms the other methods significantly. 
However, when the utilization of critical
sections is low, i.e., $H=[5\%-10\%]$
in~Fig.~\ref{fig:sched_periodic}~(a)~and~(b), ROP-PCP
outperformed the proposed approaches. The reason is that the
constraint programming of the problem $J_{Z}|r_j,l_j|L_{\max}$ has the
objective to minimize the maximum lateness, but ignores the execution
order of the sub-jobs that do not have any influence on the optimal
lateness, which may lead to lower performance when the utilization of the non-critical sections
is high.  
When the utilization of critical section is  medium, i.e.,
$H=[10\%-40\%]$, and the number of processor is relative small i.e., $M=\{4, 8\}$,
the newly proposed 
DGA-based methods 
and the extension PRP-LEDF-P both outperform all the other methods
significantly, but their relation differs 
depending on the utilization value.

\section{Conclusion}
\label{sec:conclusion}

We have removed an
important restriction, i.e., only one critical section per task, of the
recently developed dependency graph approaches (DGA). 
Regarding the computational
complexity, we show that the multiprocessor synchronization problem is
${\mathcal NP}$-complete even in very restrictive scenarios, as detailed in
Sec.~\ref{sec:computational-complexity}. We propose a
systematic design flow based on the DGA by using existing algorithms
developed for job/flow shop scheduling and provide the approximation ratio(s)
for the derived makespan. 

The evaluation results in Sec.~\ref{sec:experiments-frame-based} show
that our approach is very effective for frame-based real-time task
systems. Extensions to periodic task systems are presented
in Sec.~\ref{sec:extention-to-periodic}, and the evaluation results show that 
our approach has significant improvements, compared to existing protocols, in most
evaluated cases except light shared resource utilization. 
This paper significantly improves the applicability of the DGA by
allowing arbitrary configurations of the number of non-nested critical sections
per task.

\ifCLASSOPTIONcompsoc
  \section*{Acknowledgments}
\else
  \section*{Acknowledgment}
\fi

{This paper is supported by DFG, as  part of the
	Collaborative Research Center SFB876, project A1 and A3
	{(http://sfb876.tu-dortmund.de/)}.} 
The authors thank Zewei Chen and Maolin Yang
for their tool SET-MRTS~\cite{SET-MRTS} (Schedulability Experimental Tools for
Multiprocessors Real Time Systems) to evaluate the GS-MSRP, LP-GFP-FMLP,
and ROP-PCP in Fig.~\ref{fig:sched_frame} and
Fig.~\ref{fig:sched_periodic}.

\ifCLASSOPTIONcaptionsoff
  \newpage
\fi



%
\bibliographystyle{abbrv} \bibliography{real-time}
%

\ifArxivVersion
\else
\begin{IEEEbiographynophoto}{Jian-jia Chen}
is Professor at Department of Informatics in
TU Dortmund University, Germany. He received his Ph.D.
degree in from Department of Computer Science and Infor-
mation Engineering, National Taiwan University, Taiwan in
2006. His research interests include real-time systems, em-
bedded systems, power/energy-aware designs, and distributed
computing. 
\end{IEEEbiographynophoto}
\begin{IEEEbiographynophoto}{Junjie Shi}
received his master degree in electronic technology and information technology
from TU Dortmund University, Germany, in 2017 and now
is a PhD student at TU Dortmund University, supervised by
Prof. Dr. Jian-Jia Chen. His research interests are 
resource-sharing protocols for real-time systems,
resource aware scheduling for machine learning algorithms,
and computation offloading for real-time systems.
\end{IEEEbiographynophoto}
\begin{IEEEbiographynophoto}{Georg von der Br\"uggen}
is a Postdoctoral Researcher at the Max Planck Institute for Software Systems in Kaiserslautern, Germany. He received his PhD  
from TU Dortmund University, Germany, in 2019 and 
his Diploma degree in computer science from TU Dortmund University, Germany, in 2013.
His research interests are in the area of embedded and real-time systems with a
focus on real-time scheduling. 
He participated in the program committee of
multiple international conferences and workshops and was the program chair of the RTNS junior workshop JRWRTC in 2018.
\end{IEEEbiographynophoto}
\begin{IEEEbiographynophoto}{Niklas Ueter}
received his master degree in computer science
from TU Dortmund University, Germany, in 2018 and now
is a PhD student at TU Dortmund University, supervised by
Prof. Dr. Jian-Jia Chen. His research interests are in the area
of embedded and real-time systems with a focus on real-time
scheduling.
\end{IEEEbiographynophoto}
\fi




\end{document}